\documentclass[%
twocolumn,
superscriptaddress,
 amsmath,amssymb,
 aps,
 pra,
floatfix
]{revtex4-2}

\usepackage{graphicx}
\usepackage{dcolumn}
\usepackage{bm}
\usepackage{hyperref}
\hypersetup{colorlinks,linkcolor=blue,anchorcolor=blue,citecolor=blue,urlcolor=blue}
\usepackage{color}
\usepackage{dcolumn} 
\usepackage{bm}      
\usepackage{array}   
\usepackage{amssymb}
\usepackage{amsmath}
\usepackage{amsfonts}
\usepackage{hyperref}
\usepackage{url}
\usepackage[T1]{fontenc}
\usepackage{multirow}
\usepackage{booktabs}
\usepackage[british]{babel}
\usepackage{times}
\usepackage{chemformula}
\usepackage{tikz} 
\usepackage{mathtools}
\usepackage{subcaption}
\usepackage{enumerate}
\usepackage{lipsum}
\usepackage{caption}

\newcommand{\ket}[1]{\left | #1 \right \rangle}
\newcommand{\bra}[1]{\left \langle #1 \right |}

\newtheorem{theorem}{Theorem}

\newtheorem{lemma}[theorem]{Lemma}

\usepackage{braket}

\begin{document}

\title{Non-Orthogonal Amplitude Amplification for Hybrid CV-DV Quantum Processors}

\author{Jiyuan Liu}
\affiliation{Department of Electrical and Computer Engineering, North Carolina State University, Raleigh, NC 27606, USA}
\author{Joel Bierman}
\email{jhbierma@ncsu.edu}
\affiliation{Department of Electrical and Computer Engineering, North Carolina State University, Raleigh, NC 27606, USA}
\author{Yuan Liu}
 \email{q_yuanliu@ncsu.edu}
\affiliation{Department of Electrical and Computer Engineering, North Carolina State University, Raleigh, NC 27606, USA}
\affiliation{Department of Computer Science, North Carolina State University, Raleigh, NC 27606, USA}
\affiliation{Department of Physics, North Carolina State University, Raleigh, NC 27607, USA}

\begin{abstract}
        We introduce a non-orthogonal amplitude amplification (NOAA) method for hybrid continuous-variable-discrete-variable (CV-DV) quantum processors, where the available CV-assisted projector is not perfectly selective for the target state. We establish a bound on the achievable target-state fidelity, which is determined by the initial target-state amplitude and the selectivity of the accessible reflection operator. This bound limits the performance of NOAA in hybrid CV-DV systems. To mitigate this limitation, we propose using a squeezed CV ancilla to reduce the overlap between relevant CV states and a hybrid quantum signal processing (HQSP) filter to suppress errors arising from imperfect knowledge of the target eigenenergy. Numerical simulations show that, for small initial target-state amplitudes, our protocol achieves higher fidelity and efficiency than the repeat-until-success (RUS) method despite the presence of non-orthogonality. For large initial amplitudes, RUS is already efficient, and amplitude amplification provides little additional advantage.
\end{abstract}

\maketitle

\section{Introduction}\label{sec:Intro}
Recent studies of hybrid oscillator-qubit quantum processors have demonstrated a powerful way to simulate quantum many-body systems by harnessing the infinite Hilbert space dimension of oscillators \cite{liuHybridOscillatorqubitQuantum2026,kemperHybridContinuousdiscretevariableQuantum2025a,nourseUsingBosonsImprove2025a,ganProvablyEfficientQuantum2025a}. New quantum algorithms for hybrid continuous-variable-discrete-variable (CV-DV) quantum processors and extensions of familiar qubit-based techniques such as block-encoding and quantum signal processing (QSP) \cite{martynGrandUnificationQuantum2021} are discussed in some works \cite{craneHybridOscillatorqubitQuantum2024,liuHybridOscillatorqubitQuantum2026}. In such protocols, the target non-unitary operation is implemented probabilistically. After an underlying CV-DV unitary evolution, it is obtained only when the bosonic mode yields a designated measurement outcome, such as a vacuum projection or a coherent-state POVM result \cite{bellCodesigningSpectralTransformation2025}. This repeat-until-success method (RUS) may have a heavy cost and small final fidelity with small initial amplitude of the target state and thus limit its application for efficiently preparing the target state with required fidelity. 

In qubit systems, a well-known technique for increasing the success probability of a probabilistic procedure is amplitude amplification (AA) \cite{martynGrandUnificationQuantum2021, brassardQuantumAmplitudeAmplification2002}, which generalizes Grover's search algorithm \cite{groverFastQuantumMechanical1996}. Grover's algorithm provides a quadratic speedup for unstructured search, but standard Grover iterations can suffer from the so-called souffl\'e problem, where repeated reflections about the target state and the initial state eventually rotate the state away from the target state. Fixed-point variants, such as the $\pi/3$ algorithm \cite{groverFixedpointQuantumSearch2005}, avoid this overshooting behavior at the cost of losing the quadratic speedup. Yoder \textit{et al.} subsequently introduced an optimal fixed-point search algorithm that avoids the souffl\'e problem while retaining the quadratic speedup within the more general framework of amplitude amplification \cite{yoderFixedPointQuantumSearch2014}. 

Performing AA in hybrid CV-DV quantum processors introduces an additional challenge. In AA, we need an implementable phase operation marking the desired success subspace. However, in certain problems, e.g. in state-preparation problems for hybrid CV-DV systems, the target state resides in the discrete-variable (DV) subsystem \cite{bellCodesigningSpectralTransformation2025} and is generally not known explicitly, making it difficult to directly construct a unitary that prepares the desired state. Thus, a selective projector onto the target DV state, and hence the corresponding reflection operator, cannot be constructed easily in practice. Alternatively, the reflection operation can be implemented through the continuous-variable (CV) subsystem. But this would bring new problems for implementing AA in the hybrid CV-DV systems. The CV states associated with different components of the DV system are generally non-orthogonal. Consequently, such a reflection does not perfectly isolate the target state and may also act on other components with nonzero overlap.

A good example is the protocol presented in Ref. \cite{bellCodesigningSpectralTransformation2025}, where Bell \textit{et al.} introduced a family of quantum eigenvalue-transformation oracles for hybrid CV-DV circuits that can be used to prepare eigenstates of Hamiltonians describing physical systems such as molecular and spin models. In this construction, the physical system is encoded in the DV subsystem, while the CV subsystem carries information associated with its eigenstates. The target eigenstate is generally not known explicitly, since preparing this state is itself the objective of the protocol. Therefore, a reflection using a projector that can unambiguously distinguish the target state from other states is generally unavailable in practice, whereas a CV-assisted reflection can still be implemented.  Unlike an ideal target-state reflection, the resulting CV-assisted reflection is not perfectly selective because the oscillator states associated with different DV eigenstates are generally non-orthogonal.

Previous studies have considered several forms of imperfect amplitude amplification. H\o{}yer studied AA with arbitrary phases and its tolerance to phase errors, while Tulsi developed quantum search algorithms robust to reproducible and reversible errors in the selective transformations \cite{Hoyer2000,Tulsi2008}. Approximate implementations of a target-state reflection have also been studied, with bounds on the resources required to achieve a prescribed reflection accuracy and on the accumulation of such errors during fixed-point AA \cite{Chowdhury2018,Orsucci2018}, while stochastic oracle faults can eliminate the quadratic search speedup altogether \cite{RegevSchiff2008}. Related work on oblivious amplitude amplification has instead quantified errors arising when the block-encoded transformation is only approximately unitary or genuinely nonunitary \cite{Berry2015,Zecchi2025}. Ref.
\cite{Zecchi2025} further derives a lower bound on the resulting state fidelity in terms of the degree of nonunitarity. These results quantify limitations arising from phase errors, approximate reflections, faulty oracles, or nonunitary transformations. Here, we instead consider amplitude amplification with a fixed accessible projector that is implemented exactly but does not isolate the desired component of the encoded state. In hybrid CV-DV systems, this limitation arises because different DV eigenstates are correlated with nonorthogonal CV states. We refer to amplitude amplification based on such a nonselective reflection arising from non-orthogonal CV states as non-orthogonal amplitude amplification (NOAA). 

In this work, we show that the achievable fidelity of NOAA is bounded by a ratio dependent on the initial amplitude of the target state and the selectivity of the accessible projector. This bound cannot be removed by increasing the amplification degree or optimizing the AA phases and limits the performance of NOAA in hybrid CV-DV systems. We further characterize the additional fidelity reduction caused by target-energy mismatch.

To mitigate these limitations, we introduce a hybrid quantum signal processing (HQSP) filter, motivated by recent developments in quantum signal processing for hybrid CV-DV architectures \cite{rossiQuantumSignalProcessing2023, liuMixedAnalogdigitalQuantum2025a}. Our numerical simulations show that the HQSP filter can substantially improve NOAA fidelity by suppressing contributions associated with imperfect state discrimination and improving the robustness of NOAA against errors in energy estimation.

This work is structured as follows. In Sec.~\ref{sec:background}, we introduce the background of the NOAA method. In Sec.~\ref{sec:NOAA}, we analyze the NOAA fidelity and derive an upper bound on the achievable fidelity. In Sec.~\ref{sec:HQSP}, we introduce an HQSP filter to mitigate the effect of non-orthogonality and the mismatch effect to improve the NOAA fidelity. Finally, we will conclude our work in Sec.~\ref{sec:conclusion}.

\section{Background}\label{sec:background}

In this section, we introduce the main ingredients required for non-orthogonal amplitude amplification (NOAA). We first review the hybrid CV-DV spectral-filtering oracle of Ref.~\cite{bellCodesigningSpectralTransformation2025} and the corresponding repeat-until-success (RUS) state-preparation protocol in Sec. \ref{ssec:filter}. We then summarize standard amplitude amplification (AA), emphasizing the target-state reflection required by the protocol in Sec. \ref{ssec:aa}. Finally, we briefly review single-variable hybrid quantum signal processing (HQSP) in Sec. \ref{ssec:hqsp}, which will later be used to improve the reflection implemented in the CV subsystem.

\subsection{Hybrid CV-DV spectral filtering}
\label{ssec:filter}
We first consider the spectral-transformation oracle introduced by Bell \textit{et al.}~\cite{bellCodesigningSpectralTransformation2025}. Let $\hat{H}$ be a Hamiltonian acting on the DV subsystem, with eigenstates $\ket{E_n}$ and corresponding eigenenergies $E_n$, ordered such that $E_0<E_1<\cdots$. We consider an input state 
\begin{equation} 
\ket{\Psi_{\mathrm{in}}} = \sum_n \psi_n \ket{E_n}_q\otimes\ket{0}_o,
\label{eq:inputStateBackground}
\end{equation} 
where the oscillator ancilla is initialized in the vacuum state.

The key operation in Ref.~\cite{bellCodesigningSpectralTransformation2025} is an energy-dependent displacement of the oscillator. For convenience, we denote the corresponding hybrid CV-DV oracle by $P_{\hat{H}}$. Its action can be written as 
\begin{equation} 
\ket{\Psi_0} = P_{\hat{H}}\ket{\Psi_{\mathrm{in}}} 
= \sum_n \psi_n \ket{E_n}_q\otimes\ket{\alpha_n}_o, \label{eq:iniStateBackground} 
\end{equation} 
where $\ket{\alpha_n}_o$ is a coherent state with displacement 
\begin{equation} \alpha_n=\alpha(E_n-E_s). \label{eq:alpha_n_background}
\end{equation} 
Here, $\alpha$ determines the strength of the displacement, while $E_s$ is an energy shift chosen to select the desired part of the spectrum. Therefore, different energy components of the DV state are correlated with different coherent states of the oscillator.

The non-unitary spectral filter is obtained by projecting the oscillator onto the vacuum state with the projector $\hat{P}_{vac} = \hat{I}_{q}\otimes\ket{0}_o\bra{0}_o$. The DV state after a successful vacuum measurement is 
\begin{equation} 
\ket{\Psi_{\mathrm{RUS}}} = \frac{1}{\sqrt{P_s}} \sum_n \psi_n e^{-|\alpha_n|^2/2} \ket{E_n}_q, \label{eq:RUS_state}
\end{equation} 
where the success probability is
\begin{equation} 
P_s = \sum_n |\psi_n|^2 e^{-|\alpha_n|^2}. \label{eq:RUS_prob}
\end{equation} 
We will present a more detailed discussion of RUS later.
With the expression of the final state, we can see that the vacuum postselection implements a Gaussian spectral filter centered at $E_s$.

For concreteness, throughout this work we take a nondegenerate ground state $\ket{E_0}$ as the target state. The same discussion can be extended to an isolated excited state by appropriately choosing $E_s$. In the ideal case, $E_s=E_0$, such that the target component remains centered at the oscillator vacuum, while the excited-state components are displaced by amounts proportional to their energy gaps. Increasing $\alpha$ therefore enhances the spectral selectivity by suppressing the excited-state contributions. This improvement, however, comes with important limitations. First, stronger filtering does not increase the initial weight of the target state: in the large-$\alpha$ limit, the RUS probability approaches $|\psi_0|^2$, so a small initial target-state amplitude still leads to a large repetition cost. Second, for a compiled implementation of the spectral oracle, increasing $\alpha$ can also increase the approximation error and eventually lead to eigenstate hybridization \cite{bellCodesigningSpectralTransformation2025}. Therefore, the displacement strength must balance spectral selectivity against postselection cost and implementation error.

In addition to the limitation mentioned above, in practice, the target eigenenergy may only be known approximately. Let $\tilde{E}_0$ denote an estimate of the ground-state energy and define
\begin{equation} 
\delta\tilde{E}_0=E_0-\tilde{E}_0. \label{eq:displacementMismatch} \end{equation} 
Choosing $E_s=\tilde{E}_0$, Eq.~\eqref{eq:alpha_n_background} becomes $\alpha_n = \alpha(E_n-E_0)+\delta\alpha_0$, with $\delta\alpha_0=\alpha\delta \tilde{E}_0$. The target component is therefore displaced from the vacuum by $\delta\alpha_0$, which suppresses its vacuum-transition amplitude and can degrade the selectivity of the RUS filter. We will analyze its influence in Sec.~\ref{sec:NOAA}.

The RUS protocol motivates the use of amplitude amplification to improve the success probability. However, the hybrid construction introduces an additional difficulty. The coherent states associated with different energy components are not mutually orthogonal. As a result, a reflection implemented solely through the oscillator generally cannot act as an exact reflection about a single DV eigenstate. This distinction from standard AA is the central motivation for NOAA.

\subsection{Amplitude amplification} \label{ssec:aa}
Amplitude amplification (AA) provides a systematic framework for amplifying the target-subspace component of an initial state
\cite{brassardQuantumAmplitudeAmplification2002,
yoderFixedPointQuantumSearch2014}.
Let the initial state be decomposed as
\begin{equation}
    \ket{\Psi_0}
    =\psi_0\ket{\Psi_t}+\psi_1\ket{\Psi_\perp},
    \label{eq:AADecomposition}
\end{equation}
where $\ket{\Psi_t}$ is the target state and $\ket{\Psi_\perp}$ is the normalized component in its orthogonal complement. The coefficients satisfy $|\psi_0|^2+|\psi_1|^2=1$.

In the generalized AA framework, the amplification procedure is constructed from generalized reflection operators\cite{yoderFixedPointQuantumSearch2014}. These reflection operators are defined as $\hat{S}_{s}(\alpha)=e^{-i\alpha \ket{\Psi_0}\bra{\Psi_0}}
$ and $\hat{S}_{t}(\beta)=e^{i\beta \hat{\Pi}_{t}}$, where $\hat{\Pi}_{t}$ is the projector onto the target subspace. The generalized Grover iterate is defined as
\begin{equation}
    \hat{G}(\alpha,\beta)
    =
    -\hat{S}_{s}(\alpha)\hat{S}_{t}(\beta),
\end{equation}
which acts as an $SU(2)$ rotation within the two-dimensional subspace spanned by $\ket{\Psi_t}$ and $\ket{\Psi_\perp}$
\cite{groverFastQuantumMechanical1996,
yoderFixedPointQuantumSearch2014}.

In the fixed-point amplitude amplification framework of Yoder \textit{et al.}, a general AA protocol is constructed by composing a sequence of such generalized Grover iterates with appropriately chosen phase angles,
\begin{equation}
    \hat{U}_{L}
    =
    \prod_{j=1}^{l}
    \hat{G}(\alpha_j,\beta_j),\label{eq:GroverIteratesSequence}
\end{equation}
where $L=2l+1$, and the corresponding query complexity is
$L-1=2l$ \cite{yoderFixedPointQuantumSearch2014}.
The phase schedule $\{\alpha_j,\beta_j\}$ determines the amplification behavior. In particular, when all phases are chosen as $
    \alpha_j=\beta_j=\pm\pi
$, each generalized Grover iterate reduces to the standard Grover iterate, recovering the usual quadratic speedup. More general choices of the phase angles allow one to construct fixed-point AA protocols that avoid the overshooting (souffl\'e) problem while retaining the quadratic speedup
\cite{yoderFixedPointQuantumSearch2014}.

\subsection{Single-variable hybrid quantum signal processing} 
\label{ssec:hqsp}
We finally review the single-variable HQSP construction used later to design a CV filter. Hybrid QSP extends the polynomial-transformation framework of QSP to hybrid CV-DV systems \cite{liuMixedAnalogdigitalQuantum2025a}. For one qubit coupled to an oscillator, the basic signal operator is the conditional displacement 
\begin{equation} W_z^{(\kappa)} = e^{-i\frac{\kappa}{2}\hat{x}\sigma_z} = \begin{bmatrix} 
\hat{w} & 0 \\ 0 & \hat{w}^{-1} 
\end{bmatrix}, \qquad \hat{w} = e^{-i\frac{\kappa}{2}\hat{x}}, \label{eq:HQSPWBackground}
\end{equation} 
where $\kappa$ controls the displacement strength. A degree-$d$ single-variable HQSP sequence is constructed by interleaving $W_z^{(\kappa)}$ with parameterized qubit rotations, 
\begin{align} 
U_{\mathrm{HQSP}}(\hat{w}) 
&= e^{i\theta_0\sigma_x} \prod_{j=1}^{d} W_z^{(\kappa)}e^{i\theta_j\sigma_x} \nonumber\\ 
&= 
\begin{bmatrix} F(\hat{w}) & iG(\hat{w}) \\ 
iG(\hat{w}^{-1}) & F(\hat{w}^{-1})
\end{bmatrix}. 
\label{eq:HQSPBackground} 
\end{align} 
Here, $F(\hat{w})$ and $G(\hat{w})$ are degree-$d$ Laurent polynomials whose coefficients are determined by the phase angles $\{\theta_j\}$. Unitarity requires \begin{equation} F(\hat{w})F(\hat{w}^{-1}) + G(\hat{w})G(\hat{w}^{-1}) = I. \label{eq:HQSPUnitaryBackground} \end{equation} By appropriately choosing the phase angles, $F(\hat{w})$ can be designed to approximate a desired bounded function of the oscillator quadrature \cite{liuMixedAnalogdigitalQuantum2025a, dongEfficientPhasefactorEvaluation2021}. In the present work, we use this single-variable construction to design an HQSP filter that improves the selectivity of the CV-assisted reflection. The specific polynomial and its application to NOAA are introduced in Sec.~\ref{sec:HQSP}.

\section{Non-orthogonal amplitude amplification on hybrid CV-DV quantum processors}
\label{sec:NOAA}

In this section, we first introduce the problem setting of non-orthogonal amplitude amplification (NOAA) in Sec. \ref{ssec:setup}. We then analyze the NOAA fidelity and derive its upper bound in Sec. \ref{ssec:theory-noaa}. Applications in hybrid CV-DV setting are discussed in Sec. \ref{ssec:cv-dv-app}.

\subsection{Problem setup} \label{ssec:setup}

We consider the initial state introduced in Eq.~\eqref{eq:iniStateBackground}. To isolate the effect of the non-orthogonality of the CV states, we first assume $\delta\tilde{E}_0=0$. In this case, the CV state associated with the target eigenstate is centered at the vacuum, i.e., $\alpha_0=0$. We therefore construct the target projector as
\begin{align}
    \hat{\Pi}=\hat{I}_q\otimes\ket{0}_o\bra{0}_o .\label{eq:PIbackground}
\end{align}
For the target component, this projector satisfies
$\hat{\Pi}\ket{E_0}_q\ket{0}_o=\ket{E_0}_q\ket{0}_o $. For an undesired component, however, the projected state is generally nonzero because the CV states $\ket{\alpha_n}_o$ are not orthogonal to the vacuum. Hence, although $\hat{\Pi}$ preserves the target CV state, it also has a nonzero projection onto the CV states
associated with undesired eigenstates. Together with the initial-state projector $\hat{\Pi}_s=\ket{\Psi_0}\bra{\Psi_0}$, they define the reflection operators used in the Grover sequence introduced in Sec.~\ref{sec:background}. 
The resulting sequence defines the NOAA protocol. 

We now consider the case $\delta\tilde{E}_0\neq0$, corresponding to an imperfect estimate of the target eigenenergy. We refer to the resulting degradation as the \textit{mismatch effect}. The experimentally implemented projector is still chosen as
\begin{align}
    \hat{\Pi}^{\mathrm{mis}}=\hat{I}_q\otimes\ket{0}_o\bra{0}_o .
\end{align}
However, the CV state associated with the target eigenstate is no longer the vacuum. Instead,
$\alpha_0=\delta\alpha_0=\alpha\,\delta\tilde{E}_0\neq 0$ and 
$\hat{\Pi}^{\mathrm{mis}}\ket{E_0}_q\ket{\alpha_0}_o=\braket{0|\alpha_0}\ket{E_0}_q\ket{0}_o\neq\ket{E_0}_q\ket{\alpha_0}_o$. Thus, unlike the matched case, the available projector no longer preserves the target component itself.

As shown below, the mismatch introduces an additional reduction of the NOAA fidelity bound, characterized by a factor that depends on the displacement mismatch and the corresponding overlap of the target CV state with the projector.

The above discussion identifies two distinct effects that limit the fidelity of NOAA. First, even in the absence of eigenenergy mismatch, the CV projector has nonzero overlaps with undesired components because the corresponding CV states are not orthogonal to the target CV state. Second, when $\delta\tilde{E}_0\neq0$, the target CV state itself is no longer perfectly
matched to the available projector.

To mitigate these effects, in Sec.~\ref{sec:HQSP} we introduce a squeezed CV ancilla and a hybrid quantum signal processing (HQSP) filter \cite{liuMixedAnalogdigitalQuantum2025a}. The squeezing reduces the overlap between CV states associated with different eigenvalues, while the HQSP filter provides additional spectral selectivity and improves the robustness against eigenenergy mismatch. We compare the resulting protocol with repeat-until-success (RUS) postselection numerically and show that, for a small initial amplitude of the target eigenstate, the proposed protocol can
achieve higher fidelity with a lower preparation cost than RUS.

\subsection{Theory of NOAA} \label{ssec:theory-noaa}
In this subsection, we introduce a general description of NOAA based on the setup discussed above. We consider an initial state expanded in an orthonormal basis $\{\ket{\xi_n}\}$, 
\begin{equation} 
\hat{U}_{\text{prep}}\ket{0}_q\ket{0}_o = \ket{\Psi_0} = \sum_n \psi_n\ket{\xi_n}, \label{eq:initial_state} 
\end{equation} 
where $\hat{U}_{\text{prep}}$ is the state-preparation circuit. We take $\ket{\xi_0}$ as the target state. The goal of NOAA is to increase the fidelity of $\ket{\xi_0}$ in the final state. In the ideal case, we have access to the target projector 
\begin{equation} 
\hat{\Pi}_0 = \ket{\xi_0}\bra{\xi_0}. 
\end{equation} 
Standard amplitude amplification can then be used to amplify the target component \cite{yoderFixedPointQuantumSearch2014}.  Instead, NOAA uses an accessible projector $\hat{\Pi}_{\mathrm{NOAA}}$. For a given projector $\hat{\Pi}_{\mathrm{NOAA}}$, the corresponding target-selective phase operator is $\hat{S}_t(\beta)=e^{i\beta\hat{\Pi}_{\mathrm{NOAA}}}$. As discussed in Sec.~\ref{sec:NOAA}, we consider two different cases. Each case provides a specific $\hat{\Pi}_{\text{NOAA}}$. The first case is the matched case with $\delta\tilde{E}_t=0$, and the second is the mismatched case with $\delta\tilde{E}_t\neq0$.

We first describe these two cases using an explicit basis. In the matched case, the accessible projector contains the target state together with a
leakage subspace. We write
\begin{equation}
    \hat{\Pi}
    =
    \ket{\xi_0}\bra{\xi_0}
    +
    \sum_k
    \ket{\xi_{\epsilon,k}}
    \bra{\xi_{\epsilon,k}},
    \label{eq:matched_projector}
\end{equation}
where $\{\ket{\xi_{\epsilon,k}}\}$ is an orthonormal basis of the leakage subspace and $\braket{\xi_0|\xi_{\epsilon,k}}=0$. Thus, $\hat{\Pi}$ contains the target state but does not project onto the
target state alone.

In the mismatched case, the subspace selected by the accessible projector
does not contain the exact target state. We write
\begin{equation}
    \hat{\Pi}^{\mathrm{mis}}
    =
    \sum_k
    \ket{\tilde{\xi}_k}
    \bra{\tilde{\xi}_k},
    \label{eq:mismatched_projector}
\end{equation}
where $\{\ket{\tilde{\xi}_k}\}$ is an orthonormal basis of the mismatched
subspace. From this definition, we have $\hat{\Pi}^{\mathrm{mis}}\ket{\xi_0}\neq\ket{\xi_0}$.

To identify the part of the mismatched subspace associated with the target
state, we define
\begin{equation}
\ket{\xi_0^{\mathrm{mis}}}=\frac{\hat{\Pi}^{\mathrm{mis}}\ket{\xi_0}}{\sqrt{\bra{\xi_0}\hat{\Pi}^{\mathrm{mis}}\ket{\xi_0}}},
\end{equation}
where $ 0<\bra{\xi_0}\hat{\Pi}^{\mathrm{mis}}\ket{\xi_0}<1$ and the corresponding helper projector
\begin{equation}\hat{\Pi}_0^{\mathrm{mis}}=\ket{\xi_0^{\mathrm{mis}}}\bra{\xi_0^{\mathrm{mis}}}.
\end{equation}

We can now define the two cases without referring to a particular basis. For the matched case, the target subspace is contained in the subspace defined by $\hat{\Pi}$. Therefore, 
\begin{align} 
\hat{\Pi}\hat{\Pi}_0 = \hat{\Pi}_0\hat{\Pi} = \hat{\Pi}_0. \label{eq:unmismatch_relation} 
\end{align}
For the mismatched case, the target subspace is not contained in the subspace defined by $\hat{\Pi}^{\mathrm{mis}}$. Therefore, 
\begin{align} 
\hat{\Pi}^{\mathrm{mis}}\hat{\Pi}_0 \neq \hat{\Pi}_0,\,\, 
\hat{\Pi}_0\hat{\Pi}^{\mathrm{mis}}\neq \hat{\Pi}_0,
\label{eq:mismatch_relation} 
\end{align} 
We will use the relation $\hat{\Pi}^{\mathrm{mis}}\hat{\Pi}_0^{\mathrm{mis}}=\hat{\Pi}_0^{\mathrm{mis}}\hat{\Pi}^{\mathrm{mis}}=\hat{\Pi}_0^{\mathrm{mis}}$, which can be interpreted from the definition of $\hat{\Pi}_0^{\text{mis}}$.
These relations provide the general assumptions used in the following analysis. For the matched case, we set $\hat{\Pi}_{\mathrm{NOAA}}=\hat{\Pi}$, while for the mismatched case we set $\hat{\Pi}_{\mathrm{NOAA}}=\hat{\Pi}^{\mathrm{mis}}$.

Let us first focus on the former situation, i.e.  $\hat{\Pi}_{\text{NOAA}} =\hat{\Pi}$ is used for the Grover iterates sequence in Eq.~\eqref{eq:GroverIteratesSequence}. The NOAA fidelity of matched case is given in lemma~\ref{lem:Fidelity_of_NOAA}.
\begin{lemma}
[Matched NOAA fidelity]
\label{lem:Fidelity_of_NOAA}
    The final state after NOAA of degree $l$ lies in the subspace spanned by $\{\ket{\Psi_0},\hat{\Pi}\ket{\Psi_0}\}$, and can be written as
    \begin{align}
        \ket{\Psi_{f}} = \hat{U}_{L}\ket{\Psi_{0}} = \gamma_{l}\ket{\Psi_{0}}-\theta_{l}\hat{\Pi} \ket{\Psi_{0}}.
        \label{eq:Psi_f}
    \end{align}
    $\gamma_{l}$ and $\theta_{l}$ are some complex coefficients and $L=2l+1$. The fidelity of $\ket{\Psi_{f}}$ is given by 
    \begin{equation}
        \mathcal{F}_{\text{NOAA}} = \braket{\Psi_f|\hat{\Pi}_{0}|\Psi_{f}} = |\gamma_{l}-\theta_{l} |^{2} \braket{ \Psi_{0}|\hat{\Pi}_{0} | \Psi_{0} }  \label{eq:Fidelity_of_NOAA}
    \end{equation}
    where $\hat{\Pi}_{0}$ is defined as the ideal projector onto the ideal target subspace, and satisfies $\hat{\Pi}_{0} \hat{\Pi} = \hat{\Pi} \hat{\Pi}_{0}= \hat{\Pi}_{0}$.
\end{lemma}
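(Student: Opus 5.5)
The argument has two parts: first show the iterates preserve the two-dimensional span, then evaluate the fidelity using $\hat{\Pi}_0\hat{\Pi}=\hat{\Pi}_0$.

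\emph{Invariance.} Let $\mathcal{V}=\mathrm{span}\{\ket{\Psi_0},\hat{\Pi}\ket{\Psi_0}\}$. Since $\hat{\Pi}$ and $\ket{\Psi_0}\bra{\Psi_0}$ are projectors, $e^{i\beta\hat{\Pi}}=\hat{I}+(e^{i\beta}-1)\hat{\Pi}$ and $e^{-i\alpha\ket{\Psi_0}\bra{\Psi_0}}=\hat{I}+(e^{-i\alpha}-1)\ket{\Psi_0}\bra{\Psi_0}$. It therefore suffices to show that $\hat{\Pi}$ and $\ket{\Psi_0}\bra{\Psi_0}$ each map $\mathcal{V}$ into itself.

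For the rank-one projector this is immediate: its image is $\mathrm{span}\{\ket{\Psi_0}\}\subset\mathcal{V}$. For $\hat{\Pi}$, idempotency gives $\hat{\Pi}\ket{\Psi_0}\in\mathcal{V}$ and $\hat{\Pi}(\hat{\Pi}\ket{\Psi_0})=\hat{\Pi}\ket{\Psi_0}\in\mathcal{V}$.

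Hence every generalized Grover iterate $\hat{G}(\alpha_j,\beta_j)=-\hat{S}_s(\alpha_j)\hat{S}_t(\beta_j)$ maps $\mathcal{V}$ to itself. By induction on $j$ in the product \eqref{eq:GroverIteratesSequence}, $\hat{U}_L\ket{\Psi_0}\in\mathcal{V}$. We may thus write it as $\gamma_l\ket{\Psi_0}-\theta_l\hat{\Pi}\ket{\Psi_0}$, where the minus sign is a convention.

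If desired, the coefficients can be tracked explicitly. Let $p=\braket{\Psi_0|\hat{\Pi}|\Psi_0}$. One iterate sends $(\gamma,\theta)$ to a linear update: applying $\hat{S}_t$ gives $\gamma\ket{\Psi_0}+[(e^{i\beta}-1)(\gamma-\theta)-\theta]\hat{\Pi}\ket{\Psi_0}$. Applying $\hat{S}_s$ then only adds a multiple of $\ket{\Psi_0}$, because $\braket{\Psi_0|\hat{\Pi}|\Psi_0}=p$. This yields a $2\times2$ recursion, but the lemma does not need its closed form.

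\emph{Fidelity.} Apply $\hat{\Pi}_0$ and use $\hat{\Pi}_0\hat{\Pi}=\hat{\Pi}_0$:
\begin{equation*}
\hat{\Pi}_0\ket{\Psi_f}=\gamma_l\hat{\Pi}_0\ket{\Psi_0}-\theta_l\hat{\Pi}_0\hat{\Pi}\ket{\Psi_0}=(\gamma_l-\theta_l)\hat{\Pi}_0\ket{\Psi_0}.
\end{equation*}
Since $\hat{\Pi}_0$ is Hermitian and idempotent,
\begin{equation*}
\braket{\Psi_f|\hat{\Pi}_0|\Psi_f}=\|\hat{\Pi}_0\ket{\Psi_f}\|^2=|\gamma_l-\theta_l|^2\braket{\Psi_0|\hat{\Pi}_0|\Psi_0},
\end{equation*}
which is \eqref{eq:Fidelity_of_NOAA}.

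There is no real obstacle. The only care needed is the degenerate case where $\hat{\Pi}\ket{\Psi_0}\propto\ket{\Psi_0}$. There $\mathcal{V}$ is one-dimensional and the coefficients are not unique, but the derivation still holds for any admissible choice, since it never uses linear independence.
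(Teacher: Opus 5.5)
Your proof is correct and follows the paper's: the appendix also argues by induction over the Grover iterates, expanding $\hat{S}_t(\beta)=\hat{I}-(1-e^{i\beta})\hat{\Pi}$ and $\hat{S}_s(\alpha)=\hat{I}-(1-e^{-i\alpha})|\Psi_0\rangle\langle\Psi_0|$ to get an explicit recursion for $(\gamma_n,\theta_n)$, which your optional update reproduces, and then computes the fidelity using $\hat{\Pi}_0\hat{\Pi}=\hat{\Pi}_0$ exactly as you do. Your invariant-subspace phrasing is a coordinate-free version of that induction, and it is slightly cleaner because it does not depend on the ordering of the product and handles the degenerate case explicitly.
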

\textit{Proof}. The proof of Lemma \ref{lem:Fidelity_of_NOAA} is given in Appendix \ref{apendix:Lemma1}.

To better understand Eq.~\eqref{eq:Fidelity_of_NOAA}, we introduce the notation $P_{L}$ to evaluate the performance of AA for amplifying the fidelity of the states in the subspace defined by the projector $\hat{\Pi}$:
\begin{equation}
    P_{L} = \braket{\Psi_{f}|\hat{\Pi}|\Psi_{f}}=|\gamma_{l} - \theta_{l}|^{2} \braket{\Psi_{0}|\hat{\Pi}|\Psi_{0}}. \label{eq:PL}
\end{equation}
Combining Eq.~\eqref{eq:Fidelity_of_NOAA} and Eq.~\eqref{eq:PL}, the NOAA fidelity can be expressed with $P_L$:
\begin{equation}
    \mathcal{F}_{\text{NOAA}} = \frac{\braket{\Psi_{0}|\hat{\Pi}_{0}|\Psi_{0}}}{\braket{\Psi_{0}|\hat{\Pi}|\Psi_{0}}} P_{L}. \label{eq:Fidelity_o f_NOAA_bound}
\end{equation}
Equation~\eqref{eq:Fidelity_o f_NOAA_bound} separates the performance of NOAA into two factors. The quantity $P_L$ describes how efficiently amplitude amplification transfers population into the subspace selected by $\hat{\Pi}$, whereas $\eta$ is the fraction of the initially projected amplitude associated with the desired target state. Therefore, optimizing the amplification sequence can increase $P_L$ toward unity but cannot modify $\eta$. For a fixed accessible projector, the maximum achievable fidelity is thus $F_{\mathrm{NOAA}}\leq\eta$. Improving this ceiling requires increasing the selectivity of the accessible projector rather than increasing the amplification degree.

We next consider the mismatched case introduced in Sec.~\ref{ssec:setup}. Because the target eigenenergy is known only approximately, the accessible projector no longer contains the exact target hybrid state. We therefore set $\hat{\Pi}_{\text{NOAA}}=\hat{\Pi}^{\text{mis}}$. The resulting target-state fidelity is characterized by Lemma \ref{lem:fidelityMismatch}.

\begin{lemma}[Fidelity of mismatched NOAA]\label{lem:fidelityMismatch} 
The fidelity of mismatched NOAA is defined as
\begin{equation}
\mathcal{F}^{\textrm{mis}}_{\text{NOAA}} = \braket{\tilde{\Psi}_{f}|\hat{\Pi}_{0}|\tilde{\Psi}_{f}},
\end{equation}
We define
\begin{equation}
\mathcal{K}\equiv\braket{\tilde{\Psi}_{f}|\hat{\Pi}^{\textrm{mis}}_0\hat{\Pi}_{0}\hat{\Pi}^{\textrm{mis}}_0|\tilde{\Psi}_{f}} =||\hat{\Pi}_{0}\hat{\Pi}^{\textrm{mis}}_0\ket{\tilde{\Psi}_{f}}||^{2}\label{eq:def_of_mathcalK}
\end{equation}
as the quantity representing the overlap of the mismatched target state defined by $\hat{\Pi}^{\text{mis}}$ and the target state $\ket{\xi_0}$.
Define a final state dependent mismatch error:
\begin{equation}
    \Delta_{f} = ||\hat{\Pi}_{0}(\hat{I} - \hat{\Pi}^{\textrm{mis}}_0)\ket{\tilde{\Psi}_{f}}||.\label{eq:lem2error}
\end{equation} 
The NOAA fidelity with $\hat{\Pi}^{\text{mis}}$ is therefore bounded by
\begin{equation}
    (\sqrt{\mathcal{K}} - \Delta_{f})^2 \le \mathcal{F}^{\textrm{mis}}_{\text{NOAA}} \le (\sqrt{\mathcal{K}} + \Delta_{f})^2.\label{eq:Fidelity_of_mismatched_NOAA_bound}
\end{equation}
\end{lemma}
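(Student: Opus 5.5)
The bound follows from a single application of the triangle inequality, after splitting the identity with the helper projector $\hat{\Pi}^{\mathrm{mis}}_0$. No structural information about $\ket{\tilde{\Psi}_f}$ is required. In particular, we do not need an analogue of Lemma~\ref{lem:Fidelity_of_NOAA} for the mismatched sequence, because both $\mathcal{K}$ and $\Delta_f$ are defined directly in terms of the final state.

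The first step is to rewrite the fidelity as a squared norm. Since $\hat{\Pi}_0$ is an orthogonal projector, $\hat{\Pi}_0^\dagger\hat{\Pi}_0=\hat{\Pi}_0$, and therefore
\begin{equation}
\mathcal{F}^{\mathrm{mis}}_{\text{NOAA}}=\braket{\tilde{\Psi}_f|\hat{\Pi}_0|\tilde{\Psi}_f}=\|\hat{\Pi}_0\ket{\tilde{\Psi}_f}\|^2 .
\end{equation}
The same identity, applied to the vector $\hat{\Pi}^{\mathrm{mis}}_0\ket{\tilde{\Psi}_f}$ and combined with the Hermiticity of $\hat{\Pi}^{\mathrm{mis}}_0$, confirms the second equality in Eq.~\eqref{eq:def_of_mathcalK}. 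That is, $\mathcal{K}=\|\hat{\Pi}_0\hat{\Pi}^{\mathrm{mis}}_0\ket{\tilde{\Psi}_f}\|^2$.

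The second step is the decomposition. Insert $\hat{I}=\hat{\Pi}^{\mathrm{mis}}_0+(\hat{I}-\hat{\Pi}^{\mathrm{mis}}_0)$ to write
\begin{equation}
\hat{\Pi}_0\ket{\tilde{\Psi}_f}=\hat{\Pi}_0\hat{\Pi}^{\mathrm{mis}}_0\ket{\tilde{\Psi}_f}+\hat{\Pi}_0(\hat{I}-\hat{\Pi}^{\mathrm{mis}}_0)\ket{\tilde{\Psi}_f}\equiv \ket{a}+\ket{b}.
\end{equation}
By definition, $\|\ket{a}\|=\sqrt{\mathcal{K}}$ and $\|\ket{b}\|=\Delta_f$. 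Applying the triangle inequality and the reverse triangle inequality then gives
\begin{equation}
\bigl|\sqrt{\mathcal{K}}-\Delta_f\bigr|\le\sqrt{\mathcal{F}^{\mathrm{mis}}_{\text{NOAA}}}\le\sqrt{\mathcal{K}}+\Delta_f .
\end{equation}
Both sides are nonnegative, so squaring preserves the inequalities. Since $|x|^2=x^2$ for real $x$, this is exactly Eq.~\eqref{eq:Fidelity_of_mismatched_NOAA_bound}.

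The argument has no substantive obstacle; the only thing to check is that every operator involved is a genuine orthogonal projector, which is what justifies the norm identities. For $\hat{\Pi}^{\mathrm{mis}}_0$ this holds because $\ket{\xi_0^{\mathrm{mis}}}$ is normalized. The normalization makes sense because $\bra{\xi_0}\hat{\Pi}^{\mathrm{mis}}\ket{\xi_0}>0$ by assumption.

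The physically meaningful work, namely estimating $\mathcal{K}$ via the overlap $|\braket{\xi_0|\xi_0^{\mathrm{mis}}}|^2=\bra{\xi_0}\hat{\Pi}^{\mathrm{mis}}\ket{\xi_0}$ and relating it to the matched result of Lemma~\ref{lem:Fidelity_of_NOAA}, is not required for the stated inequality. It would be done afterwards, when interpreting $\mathcal{K}$ and $\Delta_f$ in terms of $\delta\alpha_0$. For example, $\mathcal{K}=|\braket{\xi_0|\xi_0^{\mathrm{mis}}}|^2\,|\braket{\xi_0^{\mathrm{mis}}|\tilde{\Psi}_f}|^2$, and in the CV-DV setting the first factor is $|\braket{0|\alpha_0}|^2=e^{-|\delta\alpha_0|^2}$.
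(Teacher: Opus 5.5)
Your proof is correct and is essentially the paper's argument. The paper inserts the same split $\hat{I}=\hat{\Pi}^{\mathrm{mis}}_0+(\hat{I}-\hat{\Pi}^{\mathrm{mis}}_0)$ on both sides of $\hat{\Pi}_0$, which gives $\mathcal{F}^{\mathrm{mis}}_{\text{NOAA}}=\mathcal{K}+\Delta_f^2+2\,\mathrm{Re}\langle a|b\rangle$ in your notation, and it bounds the cross term by Cauchy--Schwarz; that is exactly the content of the triangle inequality you apply to $\ket{a}+\ket{b}$.
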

\textit{Proof}. The proof of Lemma 2 is given in Appendix \ref{apendix:Lemma2}.

In the next subsection, we will analyze the mismatched case in hybrid CV-DV system, from which we will see the fidelity of mismatched NOAA is also bounded by a ratio defined by initial state amplitude. 

Now, let us introduce the phase angles used in NOAA sequence.
In qubit systems, AA can achieve the desired success probability with phases given in Ref. \cite{yoderFixedPointQuantumSearch2014}:
\begin{equation}
    \alpha_{j}=2\cot ^{-1}\left( \tan\left( \frac{2\pi j}{L} \right)\sqrt{ 1-\gamma^{2} } \right)=-\beta_{l-j+1},
    \label{eq:YoderPhase}
\end{equation}
where $\gamma^{-1}=T_{1/L}(1/\delta)$. The fidelity after the fixed point AA is
\begin{equation}
    P_{L}^{\text{FP}}(\lambda)=1-\delta^{2}T_{L}(T_{1/L}(1 /\delta)\sqrt{ 1-\lambda})^{2}, \label{eq:OAAsPL}
\end{equation}
where $\lambda$ is the modulus squared of the target state's coefficient in the initial state, $T_{L}(x)$ is the Chebyshev polynomial of the first kind and $\delta$ is a parameter that controls the error of AA.
In the non-orthogonal case, we can use these phase angles to optimize the $P_{L}$ defined in Eq.~\eqref{eq:PL}. For simplicity, we will denote $P_L^{\text{FP}}$ as $P_L$ in the following analysis. Analogous to AA, NOAA with these phase angles shows a quasi fixed-point feature:
\begin{lemma}
[NOAA's fixed point property]\label{lem:FPNOAA}
In this lemma, we consider NOAA with $\delta \tilde{E}_t =0$.
Using the phase angles defined in Eq.~\eqref{eq:YoderPhase}, after the NOAA, the fidelity is 
\begin{equation}
  \mathcal{F}_{\text{NOAA}}=\frac{\braket{ \Psi_{0}|\hat{\Pi}_{0} |\Psi_{0}  } }{\braket{ \Psi_{0}|\hat{\Pi} |\Psi_{0}  } }P_L(\braket{ \Psi_{0}|\hat{\Pi} |\Psi_{0}  }),
\end{equation}
where $P_L$ is given in Eq.~\eqref{eq:OAAsPL}.

If  $\braket{ \Psi_{0}|\hat{\Pi} |\Psi_{0}  }  \ge w = 1-T_{1 /L}(1 /\delta)^{-2}\approx (\log(2 /\delta)/L)^2$, we have 
\begin{equation}
    \eta(1-\delta^{2})\le \mathcal{F}_{\text{NOAA}} \le \eta, \label{eq:bounds_inequ}
\end{equation} where $\eta=\langle \Psi_{0}\mid\hat{\Pi}_{0} \mid\Psi_{0}  \rangle /\langle \Psi_{0}\mid\hat{\Pi} \mid\Psi_{0}   \rangle$.
\end{lemma}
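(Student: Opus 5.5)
We reduce the statement to an ordinary two-dimensional amplitude-amplification problem and then quote the fixed-point analysis of Yoder \textit{et al.}

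\textbf{Step 1: the two-dimensional subspace.} By Lemma~\ref{lem:Fidelity_of_NOAA}, the NOAA sequence keeps the state inside $\mathrm{span}\{\ket{\Psi_0},\hat{\Pi}\ket{\Psi_0}\}$. Write $\lambda\equiv\braket{\Psi_0|\hat{\Pi}|\Psi_0}$ and assume $0<\lambda<1$. Define the orthonormal pair
\begin{equation}
\ket{\Phi_t}=\hat{\Pi}\ket{\Psi_0}/\sqrt{\lambda},\qquad \ket{\Phi_\perp}=(\hat{I}-\hat{\Pi})\ket{\Psi_0}/\sqrt{1-\lambda},
\end{equation}
so that $\ket{\Psi_0}=\sqrt{\lambda}\ket{\Phi_t}+\sqrt{1-\lambda}\ket{\Phi_\perp}$. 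On this span, $\hat{\Pi}$ acts as $\ket{\Phi_t}\bra{\Phi_t}$, because $\hat{\Pi}\ket{\Phi_t}=\ket{\Phi_t}$ and $\hat{\Pi}\ket{\Phi_\perp}=0$. Hence $\hat{S}_t(\beta)=e^{i\beta\hat{\Pi}}$ restricted to the span equals $e^{i\beta\ket{\Phi_t}\bra{\Phi_t}}$, and $\hat{S}_s(\alpha)$ is unchanged.

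\textbf{Step 2: invoke Yoder's result.} The restricted NOAA iteration is therefore exactly a generalized Grover sequence (Sec.~\ref{ssec:aa}) with target $\ket{\Phi_t}$ and initial target weight $\lambda$. With the phases of Eq.~\eqref{eq:YoderPhase}, the result of Ref.~\cite{yoderFixedPointQuantumSearch2014} gives
\begin{equation}
|\braket{\Phi_t|\Psi_f}|^2=P_L(\lambda),
\end{equation}
with $P_L$ as in Eq.~\eqref{eq:OAAsPL}. Since $\hat{\Pi}\ket{\Psi_f}=\braket{\Phi_t|\Psi_f}\ket{\Phi_t}$, this equals $\braket{\Psi_f|\hat{\Pi}|\Psi_f}$, i.e.\ the $P_L$ of Eq.~\eqref{eq:PL}. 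Substituting into Eq.~\eqref{eq:Fidelity_o f_NOAA_bound} gives the stated formula $\mathcal{F}_{\text{NOAA}}=\eta P_L(\lambda)$.

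\textbf{Step 3: the bounds.} It remains to show $1-\delta^2\le P_L(\lambda)\le 1$ whenever $\lambda\ge w$.

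\emph{Upper bound.} $P_L(\lambda)\le 1$ holds either because it is a probability, or directly from Eq.~\eqref{eq:OAAsPL}.

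\emph{Lower bound.} Use the Chebyshev structure. Set $x=T_{1/L}(1/\delta)\sqrt{1-\lambda}$. The condition $\lambda\ge w=1-T_{1/L}(1/\delta)^{-2}$ is equivalent to
\begin{equation}
1-\lambda\le T_{1/L}(1/\delta)^{-2},
\end{equation}
i.e.\ $|x|\le 1$. On $[-1,1]$ we have $|T_L(x)|\le 1$, so $\delta^2T_L(x)^2\le\delta^2$ and hence $P_L\ge 1-\delta^2$. Multiplying through by $\eta\ge 0$ yields Eq.~\eqref{eq:bounds_inequ}.

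\emph{Asymptotic form of $w$.} The approximation $w\approx(\log(2/\delta)/L)^2$ follows from $T_{1/L}(1/\delta)=\cosh\!\left(\tfrac1L\cosh^{-1}(1/\delta)\right)$. Use $\cosh^{-1}(1/\delta)\approx\log(2/\delta)$ for small $\delta$, and expand $1-\cosh^{-2}(y)\approx y^2$ for large $L$.

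\textbf{Main obstacle.} The only delicate point is Step 1's claim that $\hat{S}_t(\beta)$ acts on the span exactly as a rank-one reflection about $\ket{\Phi_t}$. This requires the span to be invariant under both $\hat{\Pi}$ and $\ket{\Psi_0}\bra{\Psi_0}$, which holds because $\hat{\Pi}$ is an orthogonal projector, so $\hat{\Pi}^2\ket{\Psi_0}=\hat{\Pi}\ket{\Psi_0}$. Once this holds, the NOAA sequence is literally standard fixed-point AA, and the remaining care is in matching Yoder's phase conventions ($\alpha_j=-\beta_{l-j+1}$ and the overall sign of $\hat{G}$). These sign conventions affect only a global phase, not $P_L$.
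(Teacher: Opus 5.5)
Your proposal is correct and follows the paper's own proof. It uses the same orthonormal pair $\hat{\Pi}\ket{\Psi_0}/\sqrt{\lambda}$ and $(\hat{I}-\hat{\Pi})\ket{\Psi_0}/\sqrt{1-\lambda}$, the observation that $\hat{\Pi}$ acts on their span as the rank-one projector onto the first vector so that Yoder's fixed-point analysis applies directly, and the factorization $\mathcal{F}_{\text{NOAA}}=\eta P_L$ from Lemma~\ref{lem:Fidelity_of_NOAA}. Your Step 3, which turns $\lambda\ge w$ into $0\le T_{1/L}(1/\delta)\sqrt{1-\lambda}\le 1$ and then uses $|T_L|\le 1$ to get $1-\delta^2\le P_L\le 1$, makes explicit the bound the paper leaves to its citation of Yoder et al.
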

\textit{Proof}. The proof of Lemma \ref{lem:FPNOAA} is given in Appendix \ref{apendix:Lemma3}.

In the rest of the manuscript, we will use these phase angles for analysis and numerical simulation of the NOAA fidelity in the hybrid CV-DV system.

\subsection{NOAA on Hybrid CV-DV system} \label{ssec:cv-dv-app}
In this section, we analyze the fidelity of the NOAA for hybrid CV-DV systems. In CV-DV systems, the initial state can be expressed by 
\begin{equation}
   \ket{\Psi_{0}} = \hat{U}_{\text{prep}}\ket{0}_{q}\ket{0}_{o} = \sum_{n} \psi_{n} \ket{E_{n}}_{q} \ket{\alpha_{n}}_{o}, \label{eq:CVDVinistate}
\end{equation}
where $\hat U_{\text{prep}}$ is the initial state preparation circuit, $\ket{E_{n}}_{q}$ are the eigenstates of the target Hamiltonian, $\ket{\alpha_{n}}_{o}$ are coherent states as the CV part and are proportional to eigenenergies $E_n$. Again, as we emphasized in previous sections, the DV part, representing the target physical system, is implicit to us and all we have is an estimation of the eigenenergy with error $\delta\tilde{E}_t$, that is $E_t=\tilde{E}_t+\delta \tilde{E}_t$. We assume $\ket{E_{0}}_q\ket{\alpha_0}_o$ is the target state for NOAA in the hybrid CV-DV systems. Therefore the ideal projector is
\begin{align}
    \hat{\Pi}_0=\ket{E_{0}}_q\bra{E_{0}}_q\otimes\ket{\alpha_0}_o\bra{\alpha_0}_o.\label{eq:CVDVidealProjector}
\end{align}
The analysis of other choices for target state will be exactly the same. 

For the matched situation, we have $\delta \tilde{E}_0=0$ and $\alpha_0=0$. We use the projector $\hat{\Pi}$ defined in Eq.~\eqref{eq:PIbackground} as the accessible projector. The NOAA fidelity is
\begin{align}
    \mathcal{F}_{\text{NOAA}} = \eta P_{L} 
                              =\frac{P_{L}}{\sum_{n} |\psi_{n}|^{2} |\braket{\alpha_{n}|\alpha_{0}}|^{2}/|\psi_{0}|^{2}}.\label{eq:CVDVfidelityOfMatchNOAA}
\end{align}
As we mentioned above, Eq.~\eqref{eq:CVDVfidelityOfMatchNOAA} contains two distinct contributions to NOAA fidelity. The factor $P_L$ characterizes the intrinsic performance of the fixed-point amplification sequence, whereas the denominator describes the residual amplitude of the undesired eigenstates after the coherent-state projection. To make this explicit, we define
\begin{align}
    \mathcal{R}=\sum_{n \neq 0} |\psi_{n}|^{2} |\braket{\alpha_{n}|\alpha_{0}}|^{2}/ |\psi_0|^{2}.\label{eq:mathcalR}
\end{align}
The fidelity can be written as $\mathcal{F}_{\text{NOAA}}=P_L/(1+\mathcal{R})$. Coherent states' non-orthogonality introduces a fidelity reduction that is independent of the optimization on the amplification sequence phase angles. In the weak-overlap regime, $\mathcal{R}\ll1$, $\mathcal{F}_{\text{NOAA}}=P_L\left[1-\mathcal{R}+\mathcal{O}(\mathcal{R}^2)\right]$. Thus, small individual coherent-state overlaps do not by themselves guarantee high fidelity. As quantified by $R$ in Eq.~\eqref{eq:mathcalR}, their cumulative weighted contribution must remain small relative to the initial target-state weight. Only in this regime is the overlap-induced reduction in fidelity negligible.

For simplicity, let us consider a simple two-state model, where the initial state is given 
\begin{align}
    \ket{\Psi_0^{(2)}} = \psi_0 \ket{E_0}_q\ket{\alpha_0}_o + \psi_1 \ket{E_1}_q\ket{\alpha_1}_o.\label{eq:2statesModel}
\end{align}
A more general model with more than two states can be easily inferred from this simple model. The fidelity for the two-state model is
\begin{align}
    \mathcal{F}_{\text{NOAA}} = \eta P_L(|\psi_0|,\epsilon),
    \label{eq:two_state_NOAA}
\end{align}
where $\epsilon=\braket{\alpha_1|\alpha_0}=\exp (-|\alpha_1-\alpha_0|^2 /2)$ and we emphasize the dependence on $\psi_0$ and $\epsilon$ of $P_L$. We assume $\{\alpha_n\}$ are real for the discussions in this section. 
In Eq.~\eqref{eq:two_state_NOAA}, the factor $\eta\le1$ represents a non-orthogonality-induced fidelity ceiling. In particular, even a small coherent-state overlap may substantially reduce the fidelity when the initial target amplitude $|\psi_0|^2$ is small. The desired $\eta \to1$ will be approached either in the orthogonal limit $\epsilon\to0$ or when the initial state already has a large overlap with the target state.  

\begin{figure}
\centering
\includegraphics[width=\columnwidth]{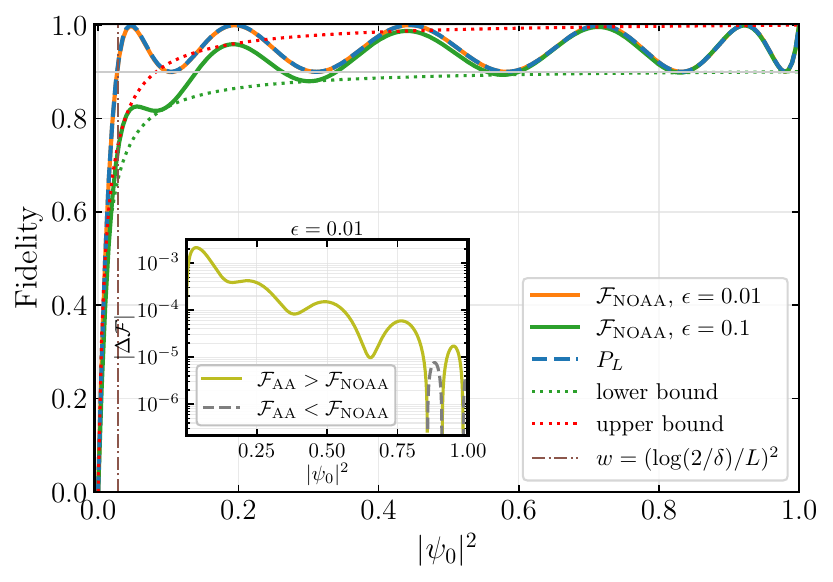}
\caption{\textbf{A comparison between AA and NOAA for two-state model.} Both NOAA sequence and AA sequence have degree $l=5$ and the error parameter $\delta^2=0.1$. The AA fidelity $P_L$ is given by Eq.~\eqref{eq:OAAsPL}. The NOAA fidelity $\mathcal{F}_{\text{NOAA}}$ is given by Eq.~\eqref{eq:two_state_NOAA}. The upper bound of NOAA (red dashed line) $\eta$ is defined in Eq.~\eqref{eq:two_state_NOAA} and the corresponding lower bound (green dashed line) is $\eta(1-\delta^2)$. The horizontal gray solid line represents $1-\delta^2$, the lower bound of AA. Inset shows the difference $|\Delta \cal F|=|\mathcal{F}_{\text{AA}}-\mathcal{F}_{\text{NOAA}}|$, where $\epsilon$ for NOAA here is $0.01$.}
\label{fig:2state}
\end{figure}

In Figure~\ref{fig:2state}, we compare the NOAA fidelity with the AA fidelity (with $\epsilon=0$) and the upper and lower bound of NOAA defined in Eq.~\eqref{eq:bounds_inequ}. For $\epsilon=0.1$, the NOAA fidelity is lower than that of AA at small $|\psi_0|^2$ for the fixed-point phase angles considered here. More generally, the NOAA fidelity is bounded by $\eta$ independently of the choice of the phase angles $\{\alpha_j,\beta_j\}$. Therefore, although different phase choices may change the factor $P_L$, they cannot remove the non-orthogonality-induced fidelity ceiling. For $\epsilon=0.01$, NOAA's performance is very close to AA, the inset shows the difference of the fidelity between NOAA of $\epsilon=0.01$ and AA. As shown in the numerical results, NOAA also shows certain fixed point feature for small non-orthogonality, for $|\psi_0|^2>(w-\epsilon^2)/(1-\epsilon^2)\approx w$, the NOAA fidelity is above $\eta(1-\delta^2)$, where $w$ is defined in Lemma \ref{lem:FPNOAA}.

For the mismatched situation where $\delta \tilde{E}_0 \neq 0$, the value of $\alpha_0$ becomes $\alpha\delta \tilde{E}_0=\delta \alpha_0$. The mismatched projector is expressed as $\hat{\Pi}^{\textrm{mis}} = \hat{I}_q\otimes \ket{0}_o\bra{0}_o$, here we use the relation $\tilde{\alpha}_0=0$. The corresponding helper projector is expressed $\hat{\Pi}^{\textrm{mis}}_0 = \ket{E_0}_q\bra{E_0}_q\otimes \ket{0}_o\bra{0}_o$. The ideal projector $\hat{\Pi}_0$ is defined in Eq.~\eqref{eq:CVDVidealProjector}.

We first consider the general initial state defined in Eq.~\eqref{eq:CVDVinistate}. Using the result in Lemma \ref{lem:Fidelity_of_NOAA}, the final state of NOAA can be expressed as 
\begin{align}
    \ket{\tilde{\Psi}_f}=\tilde\gamma_l\ket{\Psi_0}-\tilde{\theta}_l\hat{\Pi}^{\text{mis}}\ket{\Psi_0},
\end{align}
and the final state dependent mismatch error defined in Eq.~\eqref{eq:lem2error} is given by
\begin{align}
    \Delta_{f} ^{2} = |\tilde\gamma_{l}|^{2}|\psi_{0}|^{2}(1-e^{-|\delta \alpha_0|^{2}})^2.
\end{align}
From the definition, the product of the projectors $\hat{\Pi}_0$ and $\hat{\Pi}_0^{\text{mis}}$ is $\hat{\Pi}^{\textrm{mis}}_{0}\hat{\Pi}_{0}\hat{\Pi}^{\textrm{mis}}_{0}
= |c|^{2} \hat{\Pi}^{\textrm{mis}}_{0}$, where $c$ is the overlap between $\ket{\tilde{\alpha}_0}$ and $\ket{\alpha_0}$, we have
\begin{equation}
    |c|^{2}=|\langle \tilde{\alpha}_{0}|\alpha_{0}\rangle|^2=e^{-|\delta \alpha_0|^{2}}. \label{eq:cModulesquare}
\end{equation}
Then, Eq.~\eqref{eq:def_of_mathcalK} becomes
\begin{align}
    \cal K =& \langle \tilde{\Psi}_f|\hat{\Pi}_0^{\text{mis}}\hat{\Pi}_0\hat{\Pi}_0^{\text{mis}}|\tilde{\Psi}_f\rangle\nonumber\\
    =& |c|^2\cal F',
\end{align}
here we use the notation $\mathcal{F}'=\langle\tilde{\Psi}_f|\hat{\Pi}_0^{\text{mis}}|\tilde{\Psi}_f\rangle$. Since these two mismatched projectors satisfy $\hat{\Pi}^{\textrm{mis}}_0\hat{\Pi}^{\textrm{mis}} = \hat{\Pi}^{\textrm{mis}}\hat{\Pi}^{\textrm{mis}}_0= \hat{\Pi}^{\textrm{mis}}_0$, using the result in Lemma \ref{lem:Fidelity_of_NOAA} and Lemma \ref{lem:FPNOAA}, we have the following expression
\begin{align}
    \mathcal{F}'=&\tilde{\eta}P_L(|\psi_n|^2,|\langle\tilde{\alpha}_0|\alpha_n\rangle|^2),\label{eq:calF'}
\end{align}
where $\tilde{\eta}=\braket{\Psi_{0}|\hat{\Pi}^{\textrm{mis}}_{0}|\Psi_{0}} /\braket{\Psi_{0}|\hat{\Pi}^{\textrm{mis}}|\Psi_{0}}$. For small $\delta \alpha_0$, the relation between $\mathcal{F}'$ and the fidelity of matched NOAA  $\mathcal{F}_{\text{NOAA}}$ is
\begin{align}
    \mathcal{F}'=\mathcal{F}_{\text{NOAA}}+\mathcal{O}(|\delta\alpha_0|),
\end{align}
this is because $\langle\tilde{\alpha}_0|\alpha_n\rangle$ contains first order in $|\delta\alpha_0|$ and in general will not be canceled.

From Lemma~\ref{lem:fidelityMismatch}, fidelity is bounded by
\begin{equation}
    \begin{aligned}
     \left[\sqrt{|c|^2\mathcal{F'}} - |\tilde\gamma_{l}||\psi_{0}|(1-|c|^2)\right]^{2}\le \mathcal{F}^{\textrm{mis}}_{\text{NOAA}}\\
    \le \left[\sqrt{|c|^2\mathcal{F'}}+   |\tilde\gamma_{l}||\psi_{0}|(1-|c|^2)\right]^{2}.\label{eq:CVDVmisbound} 
    \end{aligned}
\end{equation}

In fact, for the forms of the projectors defined here, it is not hard to get an analytic expression of the mismatched NOAA's fidelity:
$\mathcal{F}_{\text{NOAA}}^{\text{mis}}=|\psi_0|^2|\tilde{\gamma}_l-|c|^2\tilde{\theta}_l|^2$, and $\mathcal{F}'=|\psi_0|^2|c|^2|\tilde{\gamma}_l-\tilde{\theta}_l|^2$. Thus, for small $\delta \alpha_0$, we have
\begin{align}
    \mathcal{F}_{\text{NOAA}}^{\text{mis}}=&|c|^2\mathcal{F}'+\mathcal{O}(|\delta \alpha_0|^2)\\
    =& \mathcal{F}_{\text{NOAA}}+\mathcal{O}(|\delta\alpha_0|)
\end{align}

For the simple two-state model defined in Eq.~\eqref{eq:2statesModel}, the expression of fidelity is simplified to
\begin{align}
    \mathcal{F}'=&\tilde{\eta}P_L\left(|\psi_0|^2,|c|^2,|\tilde{\epsilon}|,|\epsilon|\right)
\end{align}
with
$
     \tilde{\eta}
                  = \frac{1}{1+(1-|\psi_0|^{2})|\tilde{\epsilon}|^{2}/|c|^2|\psi_0|^{2}},
$
and $\tilde{\epsilon} = \braket{\tilde{\alpha}_0|\alpha_1}$.

\begin{figure}
\centering
\includegraphics[width=\columnwidth]{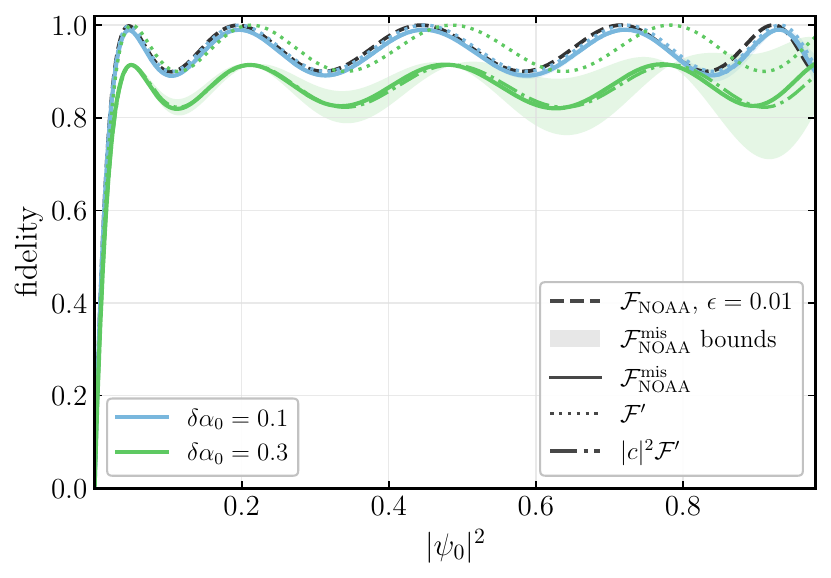}
\caption{\textbf{A comparison between matched NOAA and mismatched NOAA for two-state model.} Both NOAA sequences have degree $l=5$ and the error parameter $\delta^2=0.1$. The $\epsilon$ of matched NOAA is $0.01$. The shaded region represents the bounds defined in Eq.~\eqref{eq:CVDVmisbound}. The blue lines are mismatched NOAA with $|\delta\alpha_0|=0.1$ and the green lines are mismatched NOAA with $|\delta\alpha_0|=0.3$.}
\label{fig:2statemismatch}
\end{figure}

In Fig.~\ref{fig:2statemismatch}, we numerically calculate the mismatch effect for the two-state model. For mismatched NOAA with $|\delta \alpha_0|=0.1$, the bounds in the inequality in Eq.~\eqref{eq:CVDVmisbound} are tight, and $\mathcal{F}_{\text{NOAA}}^{\text{mis}}$ is close to $\mathcal{F}_{\text{NOAA}}$. Furthermore, $|c|^2\mathcal{F}'$ is a good approximation of  $\mathcal{F}_{\text{NOAA}}^{\text{mis}}$ for cases with small $|\delta \alpha_0|$. For  mismatched NOAA of $|\delta \alpha_0|=0.3$, the bounds are looser compared to the $|\delta\alpha_0|=0.1$, and $\mathcal{F}_{\text{NOAA}}^{\text{mis}}$ deviates from $\mathcal{F}_{\text{NOAA}}$. But $|\delta \alpha_0|^2=0.3^2=0.09$ is still small,  $|c|^2\mathcal{F}'$ can approximate $\mathcal{F}_{\text{NOAA}}^{\text{mis}}$ fairly well within a certain region of $|\psi_0|^2$. Since $\mathcal{F}'$ itself is a matched NOAA,  seen in Eq.~\eqref{eq:calF'}, we conclude that the mismatched NOAA is approximately bounded by $|c|^2\tilde{\eta}$ for small $|\delta \alpha_0|$. Thus, the mismatched NOAA also has certain fixed point feature for small deviation $|\delta \alpha_0|$. The degradation of the mismatched NOAA fidelity is mainly contributed by the overlap between $\ket{\tilde{\alpha}_0}$ and $\ket{\alpha_0}$, as presented in Eq.~\eqref{eq:cModulesquare}.

In conclusion, the non-orthogonality of the CV states imposes a fidelity ceiling on NOAA, while the mismatch effect introduces an additional reduction in fidelity. Implementing a high-fidelity reflection about a given coherent state requires exponentially increasing precision in both the knowledge of $\alpha_0$ and the implementation of the corresponding reflection gate. In the next section, we introduce an HQSP filter to deal with mismatch effect and propose an integrated solution to implement NOAA on hybrid CV-DV processors efficiently.

\section{HQSP filter for NOAA on hybrid CV-DV quantum processors}\label{sec:HQSP}

In Sec.~\ref{sec:NOAA}, we introduced the theory of NOAA and discussed the application of NOAA in the hybrid CV-DV systems, where we found that the NOAA fidelity suffers from the non-orthogonality nature of the CV states and the mismatch effect. A straightforward way to deal with non-orthogonality is to use states with smaller mutual overlaps. The ideal state would be position eigenstates $\{\ket{x}\}$ which are orthonormal, but in reality, exact position eigenstates could not be prepared. Alternatively, we use the following displaced squeezed oscillator states as basis states:
\begin{equation}
    \ket{\alpha,r}_o=D(\alpha=x/\sqrt{2})S(r)\ket{0}_o,
\end{equation}
here we use $D(\alpha)=\exp(\alpha\hat{a}^\dagger-\alpha^*\hat{a})$ and $S(r)=\exp[(r\hat{a}^2-r(\hat{a}^\dagger)^2)/2]$ with $r\ge0$, where $\hat{a}$ and $\hat{a}^\dagger$ are annihilation and creation operators of Fock states. 
In the position representation, the wavefunction of the oscillator state is
\begin{align}
    \langle x|\alpha_0,r\rangle=\frac{e^{r/2}}{\pi^{1/4}}\exp\left[-\frac{e^{2r}}{2}(x-x_0)^2\right].
\end{align}
This wavefunction has a Gaussian shape with mean position at $x_0$ and width of $\sigma=e^{-r}/\sqrt{2}$. The overlap between two states is
\begin{align}
    \langle\alpha_1,r|\alpha_2,r\rangle = \exp\left[ -\frac{e^{2r}}{2}(\alpha_1-\alpha_2)^2\right],
\end{align}
for $r>0$, the overlap will decay quicker than the unsqueezed displaced states. For the given system, by choosing the proper squeezing parameter $r$, we improve the orthogonality between CV states.
Using these basis states requires us to use a different CV state in the input state defined in Eq.~\eqref{eq:inputStateBackground}, we denote the new input state for hybrid CV-DV systems as
\begin{align}
    \ket{\Psi^r_{\text{in}}}= \sum_n \psi_n\ket{E_n}_q\otimes \ket{0,r}_o.
\end{align}
After the oracle $P_{\hat{H}}$, which is a qubit-controlled displacement introduced in Sec.~\ref{sec:background}, the output state for NOAA is 
\begin{align}
    \ket{\Psi^r_0}=\sum_n \psi_n \ket{E_n}_q\otimes\ket{\alpha_n,r}_o.\label{eq:Psi_0^r}
\end{align}
In Ref.~\cite{bellCodesigningSpectralTransformation2025}, $P_{\hat{H}}$ is a qubit-controlled displacement operation, for the more general situation, where $P_{\hat{H}}$ could be an arbitrary Gaussian operation \cite{liuHybridOscillatorqubitQuantum2026}, a real squeezing parameter $r$ may not be sufficient to generate well-separated oscillator states, one may need to use $r'=re^{i\theta}$ as the squeezing parameter. Nevertheless, the underlying idea remains unchanged, we aim to use a squeezed vacuum state
to improve the distinguishability of the CV states.

However, NOAA also suffers from the mismatch effect. For the given error $\delta \tilde{E}_t$, the quantity $|c|^2$ defined in Eq.~\eqref{eq:cModulesquare} of the new oscillator basis becomes
\begin{align}
    |c^{(r)}|^2=|\langle \tilde{\alpha}_{t},r|\alpha_{t},r\rangle|^2=\exp(-e^{2r}\alpha^2\delta\tilde{E}_t^2),
\end{align}
for $r\ge0$, we can define the effective mismatch parameter $\delta\bar{\alpha}_t=e^{r}\alpha\delta\tilde{E_t}$ becomes larger as the states become more nearly orthogonal, thus, $|c^{(r)}|$ decreases as $r$ grows, thereby leading to a greater reduction in fidelity.
\begin{figure}
    \centering
    \includegraphics[width=\columnwidth]{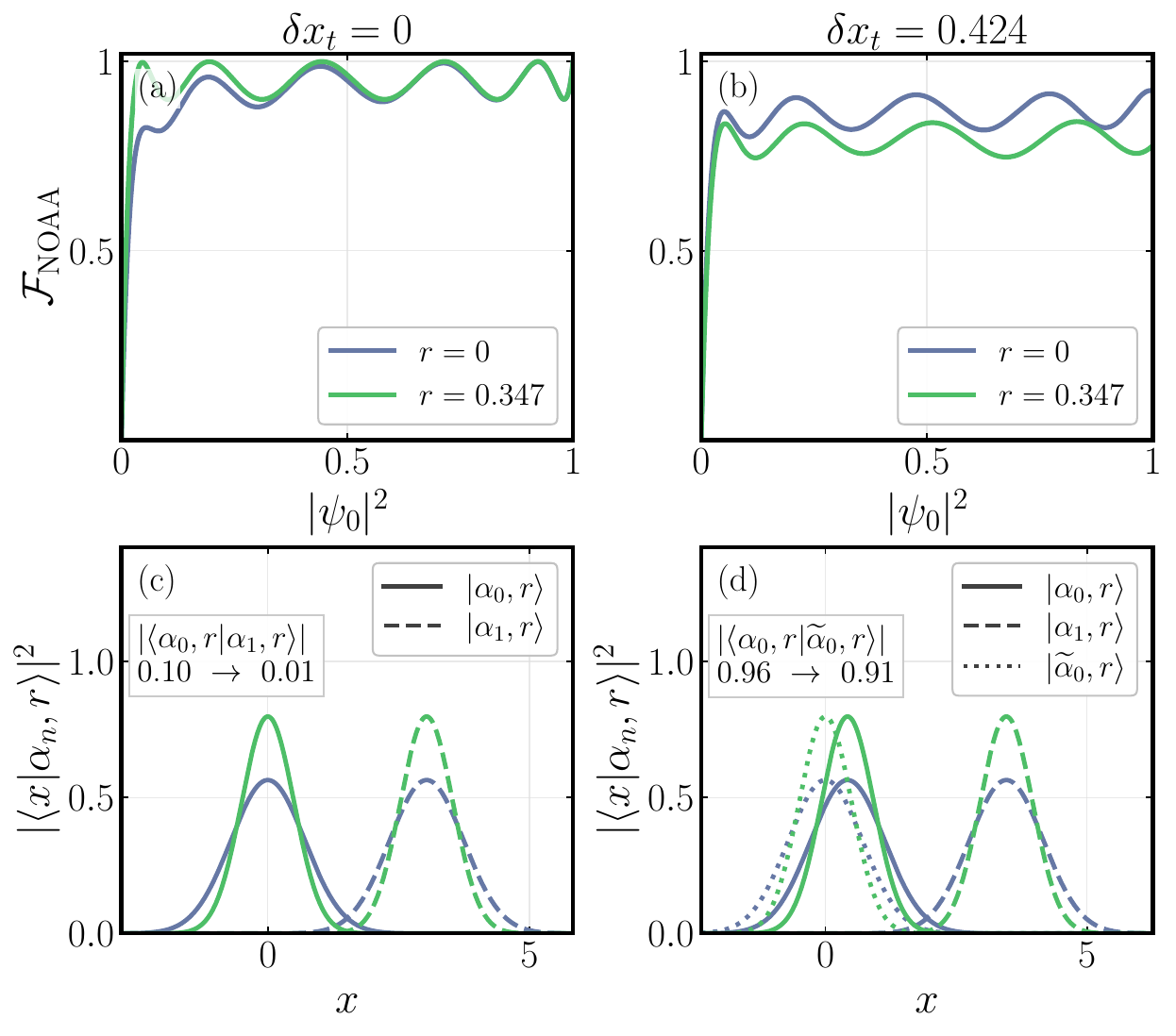}
    \caption{\textbf{Comparison between unsqueezed-state model and squeezed states model.} We consider a two-state NOAA model with sequence degree $l=5$, fixed-point error parameter $\delta^2=0.1$, we numerically calculate the fidelity with QuTiP \cite{qutip5}. The target oscillator state is centered at $x_0=0$, while the non-target state is chosen to center at $x_1\approx3.03$ such that $\epsilon=|\langle\alpha_0,r|\alpha_1,r\rangle|=0.1$ for $r=0$ with $\alpha_0=x_0/\sqrt{2}$ and $\alpha_1=x_1/\sqrt{2}$. We compare the unsqueezed case $r=0$ with the squeezed case $r=0.347$, for which $\epsilon$ decreases from $0.10$ to $0.01$. In the matched case, the projector $\hat{\Pi}$ is centered on the target state, corresponding to $\delta x_0=0$. In the mismatched case, the projector $\hat{\Pi}^{\text{mis}}$'s center is displaced from the target center by a fixed absolute distance $\delta x_0=0.3\times\sqrt{2}\approx0.424$, while the physical input states are kept unchanged. Consequently, the target–projector overlap magnitude decreases from approximately 0.96 in the unsqueezed case to 0.91 in the squeezed case.}
    \label{fig:squeeze_mismatch}
\end{figure}

Fig.~\ref{fig:squeeze_mismatch} shows a two-state model of the comparison between using unsqueezed states and squeezed states. We use a similar two-state model defined in Eq.~\eqref{eq:2statesModel}.  In the matched case, shown in Figs.~\ref{fig:squeeze_mismatch}(a) and \ref{fig:squeeze_mismatch}(c), increasing the squeezing from $r=0$ to $r=0.347$ narrows the oscillator wavepackets and reduces the target–non-target overlap from $0.10$ to $0.01$. The two oscillator states therefore become more distinguishable, and the NOAA fidelity approaches more closely the ideal orthogonal-state result. By contrast, in the mismatched case shown in Figs.~\ref{fig:squeeze_mismatch}(b) and \ref{fig:squeeze_mismatch}(d), the projector center is displaced from the true target center by the same absolute position error for both squeezing parameters. Because the squeezed target state has a smaller position-space width, the same displacement corresponds to a larger error relative to the wavepacket width, reducing the target–projector overlap from approximately $0.96$ to $0.91$.

Therefore, squeezing enhances not only the orthogonality between different oscillator states, but also the sensitivity to a fixed energy-estimation error. Increasing $r$ alone would lead to a tradeoff between suppressing overlaps of different basis states and maintaining robustness against projector mismatch. Consequently, high-fidelity NOAA cannot, in general, be achieved by squeezing alone. To overcome this limitation, we introduce the HQSP filter, which provides an additional energy-selective suppression of the undesired components and a tolerance window for projector mismatch. Importantly, within the HQSP construction, squeezed displaced states are not merely used to improve the orthogonality of the oscillator encoding, they are an essential resource for implementing the filter itself.

The remainder of this section is organized as follows. In Sec.~\ref{subsec:SetupHQSP}, we introduce the basic setup of HQSP and provide the construction of the HQSP filter for NOAA. In Sec.~\ref{subsec:NumericalResults}, a numerical simulation is used for comparison between NOAA in the CV-DV systems and RUS used in Ref.~\cite{bellCodesigningSpectralTransformation2025}.

\subsection{HQSP filter}\label{subsec:SetupHQSP}
In this section, we introduce an HQSP filter to suppress the non-target components in the initial state, thereby improving the NOAA fidelity.

\begin{figure}
\centering
\includegraphics[width=\columnwidth]{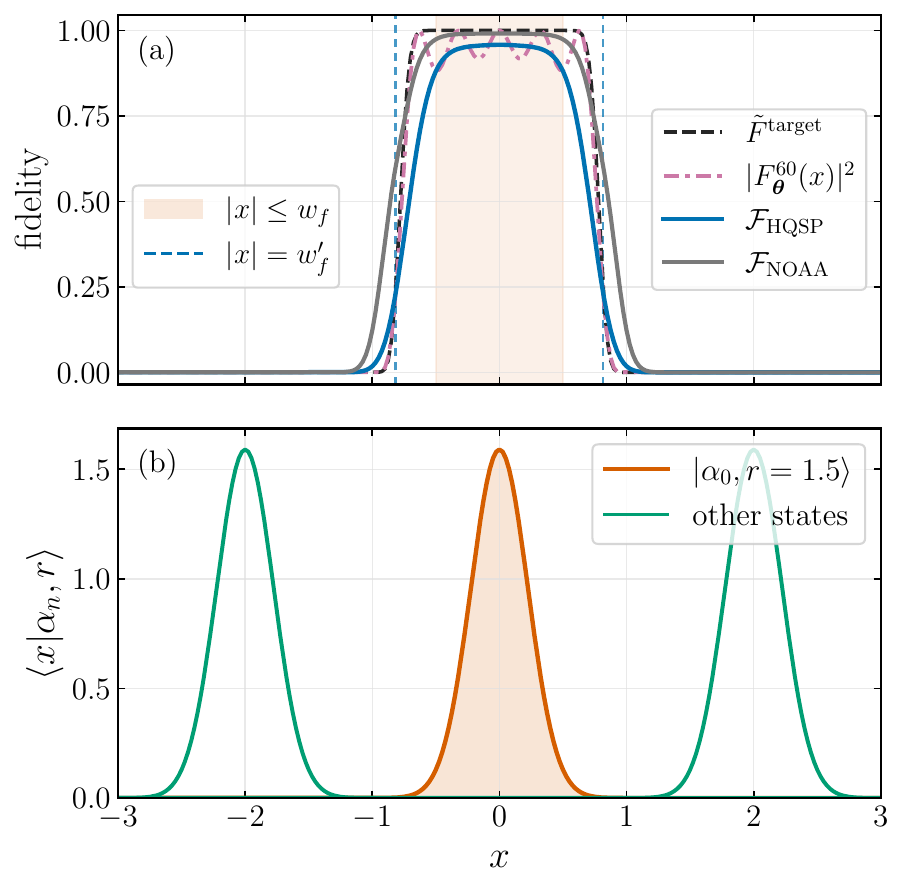}
\caption{\textbf{HQSP filter and NOAA.} (a) The numerical result of the target function $\tilde{F}^{\text{target}}$, the encoded polynomial $F_{\boldsymbol{\theta}}^{60}$, the fidelity of HQSP $\mathcal{F}_{\text{HQSP}}$ and NOAA $\mathcal{F}_{\text{NOAA}}$, with QuTiP \cite{qutip5}. The target function $\tilde{F}^{\text{target}}(x)$ is defined in Eq.~\eqref{eq:HQSP_target_func}. $F_{\boldsymbol{\theta}}^{60}(x)$ is the degree-60 polynomial encoded in HQSP, defined in Eq.~\eqref{eq:F_boldtheta}. We use an input state $\ket{\chi}=\ket{0}_q\otimes\ket{\alpha,r}_o \otimes\ket{0}_Q$ with $\alpha= x/\sqrt{2}$ and $r=1.5$ to calculate the fidelity. The fidelity of applying HQSP alone and HQSP with NOAA are therefore defined as $\mathcal{F}_{\text{HQSP}}=|\langle \chi|\tilde{U}_{\text{HQSP}}|\chi \rangle|^2$ and $\mathcal{F}_{\text{NOAA}}=|\langle \chi|\hat{U}_{L}\tilde{U}_{\text{HQSP}}|\chi \rangle|^2$. The details of circuits $\hat{U}_{L}$ and $\tilde{U}_{\text{HQSP}}$ can be found in the discussion in Sec.~\ref{sec:NOAA} and Sec.~\ref{sec:HQSP}. The parameters of HQSP are set as $ d = 60$, $\epsilon_\text{sm}=0.33$, $T=10$ and target width is $w_f=0.5$, and the parameters of NOAA are $\delta^{2} = 0.01, \, l=2$. (b) Illustration of Gaussian states. The red curve is the target state centered at $x=0$, blue curves are states we want to filter out. From (a) we can see that the quasi-fixed-point behavior of NOAA allows a moderate-degree HQSP filter to achieve high final fidelity.}
\label{fig:HQSP_filter}
\end{figure} 

In Sec.~\ref{ssec:hqsp}, we introduced the basic setting of HQSP. To construct the filter, we choose $F(\hat{w})$ to be a real-valued Laurent polynomial and $F(\hat{w})=F(\hat{w}^{-1})$. With $\hat{w}=\exp(-i\kappa\hat{x}/2)$, we denote the corresponding response in the position variable by $\tilde{F}(\hat{x})=F(\hat{w})$. We denote the HQSP circuit with this setting as $
\tilde{U}_{\text{HQSP}}$. Because $\hat{w}$ is a periodic function of $\hat{x}$, $\tilde{F}(\hat{x})$ is a periodic function of $\hat{x}$. We therefore need to carefully set the period $T$ of the function and choose the rescaling parameter $\kappa$ such that no undesired oscillator component lies inside these replicated passbands. 

The desired polynomial transformation is block encoded in the upper left block of $\tilde{U}_{\text{HQSP}}$. Here we introduce an auxiliary qubit $\ket{0}_Q$ to implement such transformation. 
The initial state for HQSP sequence is 
\begin{align}
    \ket{\Psi_\text{in}^{\text{HQSP}}}=&\hat{U}_{\text{prep}}^{H}\ket{0}_q\otimes\ket{0}_o\otimes\ket{0}_Q=\ket{\Psi^r_0}\otimes\ket{0}_Q\label{eq:HQSPinitial_state}
\end{align}
where $\hat{U}_{\text{prep}}^{H}$ is the preparation circuit, $\ket{\Psi_0^r}$ is defined in Eq.~\eqref{eq:Psi_0^r}.
The output state of the HQSP sequence is
\begin{equation}
\begin{aligned}
|\Psi_H\rangle\approx{}&\sum_i \tilde F(x_i)\psi_i|E_i\rangle_q|\alpha_i,r\rangle_o|0\rangle_Q\\
&-\sum_i e^{i\gamma_i}\sqrt{1-\tilde F(x_i)^2}\,\psi_i|E_i\rangle_q|\alpha_i,r\rangle_o|1\rangle_Q ,
\end{aligned}
\end{equation}
where $\gamma_i$ denotes the phase of the complementary HQSP amplitude near $x_i=\sqrt{2}\alpha_i$. The approximation is valid when the HQSP response varies negligibly over the position-space width $\sigma=e^{-r}/\sqrt{2}$ of $\ket{\alpha_i,r}_o$. In the large-squeezing regime, $\ket{\alpha_i,r}_o$ becomes sharply localized around $x_i=\sqrt{2}\alpha_i$. The error of energy estimation in the position space is expressed as
\begin{equation}
    \delta x_t = \sqrt{2}\delta \alpha = \sqrt{2}\alpha \delta \tilde{E}_t, \label{eq:delta_x_t}
\end{equation}
where $\delta \tilde{E}_t$ is defined in Eq.~\eqref{eq:displacementMismatch}.

The goal of the HQSP filter is to filter states that center around $x=0$ and the filter bandwidth should be sufficiently broad to account for the error $\delta x_t$, i.e.
\begin{align}
    \tilde{F}(x)=\begin{cases}
        1, x\in(-w_f,w_f)\\
        0, |x|\in (w_f,T/2]
    \end{cases}\label{eq:exact_filter}
\end{align}
where $w_f=\sqrt{2}\delta\alpha_t^{\text{max}}$ is the half-width of the passband, $\delta\alpha_t^{\text{max}}$ as the upper bound of the target energy estimation error and $T$ is the period of the filter.

The ideal filter in Eq.~\eqref{eq:exact_filter} can be written, over one fundamental interval, as a linear combination of two shifted sign functions,
\begin{equation}
    \tilde{F}^{\text{ideal}}(x)=\frac{1}{2}\left[\text{sign}(x+w_f)-\text{sign}(x-w_f)\right],
\end{equation}
where $x\in[-T/2,T/2]$. Since the sign function is discontinuous, it cannot be represented exactly by a finite-degree HQSP sequence. We therefore use the error function as a smooth approximation~\cite{martynEfficientFullycoherentQuantum2023},
\begin{equation}
    \text{erf}(kx)=\frac{2}{\sqrt{\pi}}\int_{0}^{kx} e^{-y^2}\mathrm{d}y=\frac{2k}{\sqrt{\pi}}\int_{0}^{x} e^{-(ku)^2}\mathrm{d}u ,
\end{equation}
where $k>0$ controls the steepness of the transition.

Because the displaced squeezed states have a finite position-space width $\sigma=e^{-r}/\sqrt{2}$, in the numerical implementation we enlarge the nominal passband half-width to $w_f' = w_f + 2\sigma$, and choose the transition width as $\Delta_H=\sigma$. The smooth target response used for the HQSP phase optimization is therefore
\begin{equation}
    \tilde{F}^{\mathrm{target}}(x)=\frac{1}{2}\left[\text{erf}\!\left(k(x+w_f')\right)-\text{erf}\!\left(k(x-w_f')\right)\right].\label{eq:HQSP_target_func}
\end{equation}
We choose $k$ such that the error-function approximation reaches the specified tolerance $\epsilon_\text{sm}$ outside the transition region. Following the construction in Ref.~\cite{martynEfficientFullycoherentQuantum2023}, we set
$k=\sqrt{2W\left(2/\pi\epsilon_\text{sm}^2\right)}/\Delta_H$, where $W$ denotes the principal branch of the Lambert $W$ function. Here, $\epsilon_\text{sm}$ specifies the smoothing tolerance used to construct the target response.

We directly optimize the HQSP phase factors against samples of $\tilde{F}^{\mathrm{target}}(x)$. For the phase optimization, we use the symmetric phase factors construction of Ref.~\cite{dongEfficientPhasefactorEvaluation2021}. For a degree-$d$ HQSP sequence, we define $\tilde{d}=\left\lceil\frac{d+1}{2}\right\rceil$ independent phase parameters $\boldsymbol{\phi}=(\phi_0,\ldots,\phi_{\tilde{d}-1})$. These parameters are reflected to form symmetric phase factors $\boldsymbol{\psi}=(\psi_0,\ldots,\psi_d)$ satisfying $\psi_j=\psi_{d-j}$. The phase factors used in the HQSP sequence are then defined as $\boldsymbol{\theta}=\boldsymbol{\psi}+(\frac{\pi}{4},0,0,\ldots,0,-\frac{\pi}{4})$, which adds two additional phase angles for the first and the last angles in $\boldsymbol{\psi}$. This is equivalent to $ \tilde{U}_{\text{HQSP}}(\hat{x};\boldsymbol{\theta}) = e^{i\pi X_Q/4} \tilde{U}_{\text{HQSP}}(\hat{x};\boldsymbol{\psi}) e^{-i\pi X_Q/4}$. The QSP sequence with inversion-symmetric phase factors $\boldsymbol{\psi}$ has the Pauli decomposition $\tilde U_{\mathrm{HQSP}}(\hat{x};\boldsymbol{\psi}) =\ket{x}_o\bra{x}_o\otimes\left[A(x)I_Q+iB(x)X_Q+iC(x)Z_Q\right]$ with $A(x),\,B(x),\,C(x)\in\mathbb{R}$ and $X_Q$, $Y_Q$, $Z_Q$ are the Pauli matrices acting on the ancilla qubit space \cite{dongEfficientPhasefactorEvaluation2021}. The corresponding HQSP matrix elements defined in Eq.~\eqref{eq:HQSPBackground} are $F(x)=A(x)+iC(x)$ and $iG(x)=iB(x)$. Since $e^{i\pi X_Q/4}Z_Qe^{-i\pi X_Q/4}=Y_Q$, the endpoint shift gives
$\widetilde U_{\mathrm{HQSP}}(\hat{x};\boldsymbol{\theta})
=\ket{x}_o\bra{x}_o\otimes\left[A(x)I_Q+iB(x)X_Q+iC(x)Y_Q\right]$, such that $F(x)=A(x)$ and $iG(x)=C(x)+iB(x)$ ensuring that the upper-left HQSP response is real valued.

To construct the fitting points, we introduce the real QSP variable $z=\frac{w+w^{-1}}{2}=\cos\left(\frac{\kappa x}{2}\right)$ and use the Chebyshev nodes \cite{dongEfficientPhasefactorEvaluation2021} $z_j= \cos\left(\frac{(2j-1)\pi}{4\tilde{d}}\right),\,\,j=1,\ldots,\tilde{d}$ with the corresponding position coordinates $x_j=\frac{2}{\kappa}\arccos z_j$.

For a given phase sequence, we denote the upper-left HQSP amplitude by
\begin{equation}
    F_{\boldsymbol{\theta}(\boldsymbol{\phi})}(x)=\bra{x}_o\bra{0}_Q\tilde{U}_{\mathrm{HQSP}}\left(\hat{x};\boldsymbol{\theta}\right)\ket{x}_o\ket{0}_Q,\label{eq:Fboldphi}
\end{equation}
where $\ket{x}_o$ is the position eigenstate.
The phase angles are obtained by minimizing
\begin{equation}
    \mathcal{L}\left( \boldsymbol{\phi} \right)=\frac{1}{2\tilde d}\sum_{j=1}^{\tilde d}\left|F_{\boldsymbol{\theta}(\boldsymbol{\phi})}(x_j)-\tilde{F}^{\mathrm{target}}(x_j)\right|^2 . \label{eq:optL}
\end{equation}
We minimize $\mathcal{L}(\boldsymbol{\phi})$ using the L-BFGS-B algorithm implemented in SciPy~\cite{2020SciPy-NMeth}.

The optimized phase sequence obtained from Eq.~\eqref{eq:optL} determines a degree-$d$ HQSP response $F_{\boldsymbol{\theta}}^d(x)$. For the real-valued filter considered here, the filter response satisfies $F_{\boldsymbol{\theta}}^d(w)=F_{\boldsymbol{\theta}}^d(w^{-1})$. Using the QSP variable $z$, the identity $T_n(z)=(w^n+w^{-n})/2$ shows that the corresponding Laurent polynomial can be written as a Chebyshev expansion,
\begin{equation}
    F_{\boldsymbol{\theta}}^d(x)=\sum_{n=0}^{d} c_n T_n(z),\label{eq:F_boldtheta}
\end{equation}
where $T_n(z)$ denotes the Chebyshev polynomial of the first kind. We also verify this polynomial representation numerically for the HQSP phases obtained from the optimization. The QSP parity constraint implies $c_n=0$ for $n \neq d(\text{mod}2)$, so only Chebyshev orders with the same parity as \(d\) appear.

The degree required to approximate the sign-function transition can therefore be estimated using the standard QSP polynomial-approximation result. For a normalized transition width $\Delta$ and approximation error $\epsilon_\text{poly}$, a sign-function approximation can be constructed with degree $d=\mathcal{O}\left[\frac{1}{\Delta}\log\left(\frac{1}{\epsilon_\text{poly}}\right)\right]$ \cite{martynEfficientFullycoherentQuantum2023}. Here, $\epsilon_{\text{poly}}$ denotes the polynomial-approximation error and is distinct from the smoothing tolerance $\epsilon_{\text{sm}}$ used to construct the target response.
    
For the HQSP filter, the normalized transition width is set by the physical transition width $\Delta_H$ relative to the filter period $T=\Theta(1/\kappa)$, giving $\Delta=\Theta(\kappa\Delta_H)=\Theta(\Delta_H/T)$, and hence
\begin{equation}
    d=\mathcal{O}\left[\frac{T}{\Delta_H}\log\left(\frac{1}{\epsilon_\text{poly}}\right)\right]=\mathcal{O}\left[\frac{1}{\kappa\Delta_H}\log\left(\frac{1}{\epsilon_\text{poly}}\right)\right].
\end{equation}

In Fig.~\ref{fig:HQSP_filter}, we present the numerical simulation of the filter with $T=10$, $\epsilon_\text{sm}=0.33$, $r=1.5$, $w_f=0.5$, and $d=60$. The corresponding HQSP phase factors are obtained directly from the optimization defined above. Because the subsequent NOAA sequence exhibits quasi-fixed-point behavior, a moderate-degree HQSP filter is sufficient to obtain high final fidelity. We therefore use $d=60$ throughout the numerical simulations in this section.

Now, let us examine the ideal fidelity of NOAA with HQSP. We assume $\ket{E_0}_q\ket{\alpha_0,r}_o\ket{0}_Q$ is the target state. For simplicity, we characterize the ideal HQSP filtering error by a uniform leakage amplitude $\vartheta$. Then the filtered state can be written as:
\begin{equation}
\begin{aligned}
|\Psi_H\rangle=\,&\psi_0 |E_0\rangle_q |\alpha_0,r\rangle_o\left(\sqrt{1-\vartheta^2}\,|0\rangle_Q-e^{i\gamma_0}\vartheta\,|1\rangle_Q\right)\\
&+\sum_{n\neq 0}\psi_n |E_n\rangle_q |\alpha_n,r\rangle_o\\
&\times\left(\vartheta\,|0\rangle_Q-e^{i\gamma_n}\sqrt{1-\vartheta^2}\,|1\rangle_Q\right),
\end{aligned}
\end{equation}
Here the phases $\gamma_0$ and $\gamma_n$ are irrelevant for the NOAA projector, since the success probability depends only on the $|0\rangle_Q$ branch.

We define projector of NOAA as $\hat{\Pi}=\hat{I}_q\otimes\hat{I}_o\otimes\ket{0}_Q\bra{0}_Q$, and the ideal projector $\hat{\Pi}_0=\ket{E_0}_q\bra{E_0}_q\otimes\ket{\alpha_0,r}_o\bra{\alpha_0,r}_o\otimes\ket{0}_Q\bra{0}_Q$, such that $\braket{ \Psi_{H}| \hat{\Pi}| \Psi_{H} }=(1-\vartheta^2)|\psi_{0}|^{2}+\vartheta^2(1-|\psi_{0}|^{2})$, and $\braket{ \Psi_{H}| \hat{\Pi}_{0}| \Psi_{H} }=(1-\vartheta^2 ) |\psi_{0}|^{2}$.
The factor $\eta$ can be expressed as
\begin{align}
  \eta=&\frac{(1-\vartheta^2)|\psi_{0}|^{2}}{(1-\vartheta^2)|\psi_{0}|^{2}+\vartheta^2(1-|\psi_{0}|^{2})}  ,\label{eq:hqsp_eta}
\end{align}
which approaches $1$ as $\vartheta\to 0$.
Since the HQSP defined a filter with a window around the target $\alpha_0$, in mismatched case, the upper bound of the fidelity becomes Eq.~\eqref{eq:hqsp_eta}. This demonstrates that HQSP filter can help to solve the reduction of fidelity brought by the mismatch effect.

\subsection{Numerical Results}\label{subsec:NumericalResults}

\begin{figure*}[!htbp]
\centering
\includegraphics[width=2\columnwidth]{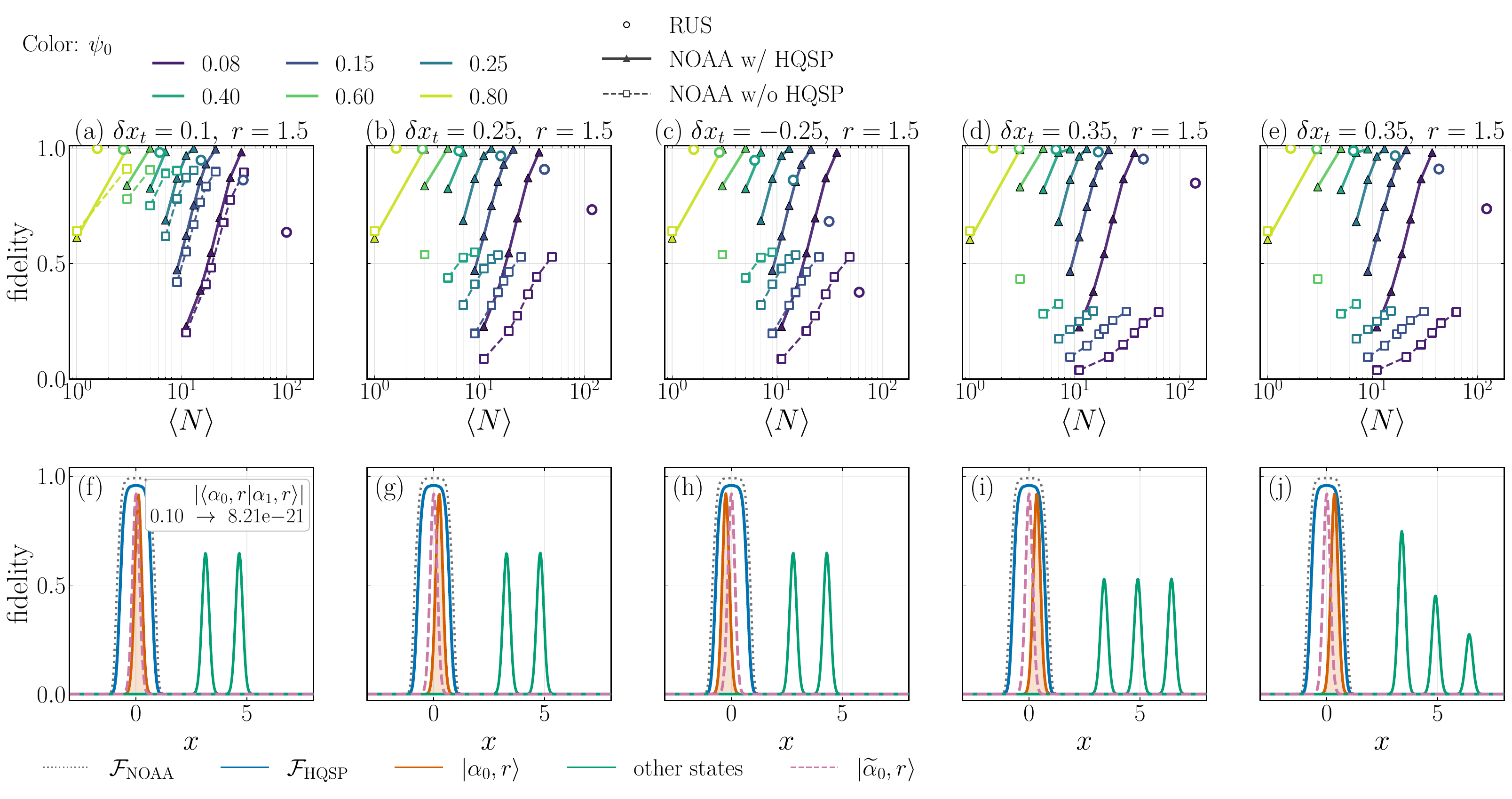}
\caption{\textbf{Comparison of RUS and NOAA performance for different target-energy mismatches and initial-state distributions.} Panels (a)–(e) show the target-state fidelity as a function of the average number of state-preparation queries $\langle N\rangle$. Colors denote different initial target amplitudes $\psi_0$. Open circles represent RUS, filled triangles represent NOAA preceded by the HQSP filter, and open squares represent NOAA without HQSP. The parameters of HQSP are $d = 60$, $\epsilon_\text{sm}=0.33$, $T=10$ and target width is $w_f=0.5$. We use the phase angles of NOAA given in Eq.~\eqref{eq:YoderPhase} with $\delta^2=0.01$. We choose the squeezing parameter $r=1.5$ for initial state $\ket{\Psi_0^r}$. Panels (f)–(j) show the corresponding oscillator wavepackets and filter responses. The solid blue curve shows the HQSP fidelity for an input squeezed wavepacket centered at \(x\), while the dotted gray curve shows the corresponding fidelity after HQSP followed by NOAA. The orange and dashed magenta curves represent the target oscillator state and the mismatched projector state, respectively, and the green curves represent non-target components. The height of the other states shows the ratio of their coefficient in the initial state. As the target mismatch increases, bare NOAA develops a strong fidelity ceiling, whereas HQSP-assisted NOAA retains high fidelity provided that the target wavepacket remains inside the HQSP passband. In panels (a-c) and corresponding (f-h), the center of oscillator states are set as $x_0=\delta x_t$, where $\delta x_t$ is defined in Eq.~\eqref{eq:delta_x_t}, $x_1\approx3.034+\delta x_t$ and $x_2\approx4.552+\delta x_t$. Panels (d) and (e) further illustrate the dependence on the number and amplitude distribution of non-target states. The center of oscillator states in (d-e) and (i-j) are $x_0=\delta x_t$, $x_1\approx3.034+\delta x_t$, $x_2\approx4.552+\delta x_t$ and $x_3\approx6.070+\delta x_t$.}
\label{fig:NOAAvsRUS}
\end{figure*}

In this section, we numerically compare the query complexity of RUS with the filtered NOAA.
In Sec.~\ref{ssec:setup}, we showed that the success probability $P_{s}$ of RUS shown in Eq.~\eqref{eq:RUS_prob} and the corresponding post-measurement state is in Eq.~\eqref{eq:RUS_state}. We clarify that, for RUS, we always use unsqueezed states as the input.

The average query complexity $\langle N_{\text{RUS}}\rangle$ of successfully postselecting the vacuum state is
\begin{align}
    \langle N_{\text{RUS}}\rangle&=\sum_{n=1}^{\infty} n (1-P_{s})^{n-1}P_{s}=\frac{1}{P_{s}}.
    \label{eq:N_RUS}
\end{align}
The fidelity of RUS is 
\begin{align}
    \mathcal{F}_{\text{RUS}}=|\braket{E_0|\Psi_{\text{RUS}}}|^2 
    =|\psi_0|^2 \,e^{-|\alpha_0|^2} \langle N_{\text{RUS}}\rangle. \label{eq:FidelityRUS}
\end{align}

In the numerical simulation, we use $\mathcal{F}_{\text{RUS}}$ to evaluate the performance of RUS. Since the average cost $ \langle N_{\text{RUS}}\rangle$ is determined by the initial state, the fidelity is fixed for a given initial state and varies as the initial overlap $\psi_0$ changes.

To compare the resource costs, let $N$ denote the number of sites in the target DV system and $d$ the degree of the HQSP filter. For the local spin-model compilation considered in Ref.~\cite{bellCodesigningSpectralTransformation2025}, the circuit depth of the CV-DV spectral-transformation oracle scales linearly with the spatial extent of the target system. If $m$ product-formula steps are required for the desired implementation accuracy, we therefore write its cost as $C_{P_{\hat H}}=\mathcal{O}(mN)$ for a one-dimensional local system. In contrast, a degree-$d$ single-variable HQSP sequence contains $d$ conditional signal operations and $d+1$ single-qubit rotations \cite{liuMixedAnalogdigitalQuantum2025a}, giving $C_{\mathrm{HQSP}}=\mathcal{O}(d)$, while the squeezing operation contributes $\mathcal{O}(1)$ operations with respect to $N$.

Therefore, for fixed HQSP degree and squeezing strength, the preparation oracle asymptotically dominates the circuit cost when $mN\gg d$. Here, $d$ is treated as fixed for the spectral window considered. More generally, the required HQSP degree may depend on the encoded spectral bandwidth through the choice of the filter period $T$. In this regime, we use the number of applications of the preparation oracle as the query complexity and neglect the additional HQSP and squeezing overhead.

For RUS, the expected cost is given in Eq.~\eqref{eq:N_RUS}. For NOAA, following the query convention of amplitude amplification, we count $\langle N_{\mathrm{NOAA}} \rangle=2l+1$, where $l$ is the number of generalized Grover iterates.

The initial state for NOAA used in numerical simulation has a form similar form in Eq.~\eqref{eq:HQSPinitial_state} and we set $r=1.5$ as the squeezing parameter. The parameters of HQSP are $d = 60$, $\epsilon_\text{sm}=0.33$, $T=10$ and target width is $w_f=0.5$. For the simulations in Fig.~\ref{fig:NOAAvsRUS}, the oscillator Hilbert space is truncated at $N_{\mathrm{BOS}}=400$. We verified convergence against $N_{\mathrm{BOS}}=600$, with the maximum absolute deviation of the plotted fidelities below $6\times10^{-6}$. The numerical results are shown in Fig.~\ref{fig:NOAAvsRUS}. As shown in the previous section, squeezing would reduce the overlap between target state and non-target states. For the parameters considered here, the overlap between CV states $\ket{\alpha_0,r}_o$ and the first excited-state component $\ket{\alpha_1,r}_o$ decreases from $0.1$ to $8.21\times10^{-21}$ as $r$ increases from $0$ to $1.5$, making the two states nearly orthogonal. For matched case, $\delta x_t=0$, the NOAA without HQSP filter and the NOAA with HQSP filter reach similar performance, and share the same fidelity for each $\psi_0$ and query cost $\langle N\rangle$. However, for slightly mismatched situation, in which $\delta x_t=0.1$, the NOAA fidelity without HQSP filter, denoted as $\mathcal{F}_{\text{NOAA}}^{noH}$, will suffer the mismatch effect, and the NOAA fidelity with HQSP filter, denoted as $\mathcal{F}_{\text{NOAA}}^H$, is robust against such small deviation, as shown in Fig.~\ref{fig:NOAAvsRUS}(a).  For RUS, in the model used for simulation, we can assume the overlaps between states of higher $n$, e.g., $\ket{\alpha_2,r}_o$ in Fig.~\ref{fig:NOAAvsRUS}(a-e) and $\ket{\alpha_3,r}_o$ in Fig.~\ref{fig:NOAAvsRUS}(d,e), and the target state can be considered as $0$ even for $r=0$. We define
\begin{align}
    \mathcal{E}\approx \frac{|\psi_1|^2}{|\psi_0|^2}e^{-\left(\alpha_1^2-\alpha_0^2\right)},
\end{align}
and $\mathcal{F}_{\text{RUS}}$ can be expressed as
\begin{align}
    \mathcal{F}_{\text{RUS}}\approx\frac{1}{1+\mathcal{E}}.\label{eq:FRUSwithE}
\end{align}
This shows that, for given $x_0$, $x_1$ and $|\psi_1|^2$, $\mathcal{E}$ decreases as $|\psi_0|$ increases, and fidelity $\mathcal{F}_{\text{RUS}}$ becomes higher. Fig.~\ref{fig:NOAAvsRUS}(a,b,d) shows such relation. For $\psi_0=0.8,$ $0.6$, where the occupation of the target state in the initial state is large, RUS can reach a high fidelity with few preparation-oracle queries and performs better than both NOAA with filter and without filter. In such situation, RUS is adequate. But for small $\psi_0$, e.g., $\psi_0=0.25$, $0.15$, $0.08$, from $\mathcal{E}$, $\mathcal{F}_{\text{RUS}}$ will be lower, and the success probability will decrease, leading to a higher cost. However, NOAA can still reach a high fidelity within smaller $\langle N \rangle$ for small $\psi_0$. Besides this, for some $\psi_0$, the fidelity of RUS may be low, NOAA can help to reach higher fidelity at a similar cost or slightly more query times. 

The fidelity of RUS is also sensitive to the target-energy mismatch. Within the approximation of Eq.~\eqref{eq:FRUSwithE}, where the dominant non-target contribution comes from the first excited state, a positive mismatch $\delta x_t>0$ increases $\alpha_1^2-\alpha_0^2$ and therefore reduces $\mathcal{E}$, leading to a higher $\mathcal{F}_{\mathrm{RUS}}$. Conversely, a negative mismatch $\delta x_t<0$ decreases $\alpha_1^2-\alpha_0^2$, thereby increasing $\mathcal{E}$ and reducing $\mathcal{F}_{\mathrm{RUS}}$. This asymmetric dependence on the sign of the mismatch can be seen by comparing Figs.~\ref{fig:NOAAvsRUS}(a)--(c).
In contrast, HQSP-assisted NOAA is largely insensitive to the value and sign of \(\delta x_t\) provided that the target wavepacket remains within the filter passband.

For the parameter regime considered here, the higher excited-state components have negligible overlap with the target state, so their main effect is to redistribute the initial weight among the non-target components. An example of a model with three non-target states is presented in Fig.~\ref{fig:NOAAvsRUS}(d-e) and (i-j). The amplitude of $\ket{\alpha_2,r}_o$ is smaller in (d-e) compared to (a-c), so $|\psi_1|^2/|\psi_0|^2$ decreases and correspondingly $\mathcal{F}_{\text{RUS}}$ becomes large. In (e), the amplitude of $\ket{\alpha_2,r}_o$ is higher than the amplitude in (d), from $\mathcal{E}$, the fidelity of RUS is smaller. It seems that mismatch effect will help RUS to reach higher fidelity, however, according to Eq.~\eqref{eq:FidelityRUS}, for fixed $|\psi_0|$ and $x_0$, higher fidelity will lead to higher cost $\langle N \rangle$, this is checked by the numerical results in Fig.~\ref{fig:NOAAvsRUS}(d) and (e). This actually supports the use of NOAA with HQSP filter. The advantage of NOAA with HQSP is more obvious in (d) and (e) for small $\psi_0$. 

In summary, the HQSP filter protects NOAA against target-energy mismatch within its passband. For small initial target amplitudes, HQSP-assisted NOAA reaches high fidelity with fewer preparation-oracle queries than RUS. For large target amplitudes, RUS is already efficient and amplitude amplification provides little additional advantage.

\section{Conclusion}\label{sec:conclusion}
In this work, we introduce the non-orthogonal amplitude amplification (NOAA) algorithm for hybrid CV-DV systems, which is designed to amplify the amplitude of a target state in the presence of non-orthogonal states. We derive the NOAA fidelity and show that it is bounded by a ratio defined by the initial state and the projector used in NOAA. Moreover, we show that NOAA also suffers from the mismatch effect, which will restrict the performance of NOAA in realistic applications. We therefore introduce an HQSP filter to mitigate the mismatch effect and improve the performance of NOAA. Finally, we compare the performance of NOAA with the repeat-until-success method (RUS) and show that NOAA can achieve higher fidelity with lower query complexity in certain parameter regimes. Our work provides a framework for designing amplitude amplification algorithms in hybrid CV-DV systems and highlights the importance of considering non-orthogonality in such systems.

Several directions remain open. Since amplitude amplification can be formulated as a particular quantum signal processing sequence, a natural extension of the present work is to develop a more general QSP framework in which the signal reflection is implemented through a nonselective or imperfectly matched reflection operator. Such a formulation may characterize the class of polynomial transformations achievable with restricted information about the target state and clarify which errors can be suppressed by increasing the polynomial degree and which are fundamentally limited by the accessible projector. The intrinsic periodicity of HQSP suggests extending NOAA to periodic or multi-window spectral problems, where replicated passbands may be exploited rather than merely avoided. More generally, sequences constructed from multiple HQSP-engineered approximate projectors may provide a route toward robust amplitude amplification using only coarse spectral information, and may potentially overcome the fidelity ceiling associated with a fixed projector.

\begin{acknowledgments}
We thank Eugene Dumitrescu, Steven Girvin, Yan Wang, and Nathan Wiebe for helpful discussions. This work is supported by the U.S. Department of Energy, Office of Science, Advanced Scientific Computing Research, under contract number DE-SC0025384.
\end{acknowledgments}

\appendix
\section{Proof of Lemma 1}\label{apendix:Lemma1}
Let us first prove that the final state after NOAA lies in the subspace spanned by {$\ket{\Psi_0},\hat{\Pi}\ket{\Psi_0}$}. For $l=0$, we have $\hat{U}_{L=1}=\hat{I}$. Assume for $l=n$, there exist $\gamma_{n}, \theta_{n} \in \mathbb{C}$, such that 
\begin{equation}
    \hat{U}_{L=2n+1}\ket{\Psi_{0}}=\gamma_{n}\ket{\Psi_{0}}-\theta_{n}\hat{\Pi}\ket{\Psi_0}.
\end{equation}
Then for $l=n+1$, denote $A_{l}=1-e^{-i\alpha_{l}}, \, B_{l} = 1 - e^{i \beta_{l}}$, we have
\begin{align}
    \hat{U}_{2n+3}\ket{\Psi_0} =& \hat{G}(\alpha_{n+1},\beta_{n+1}) \hat{U}_{2n+1}\ket{\Psi_0} \nonumber\\
    =& -S_s(\alpha_{n+1})S_{t}(\beta_{n+1})(\gamma_{n}\ket{\Psi_{0}}-\theta_{n}\hat{\Pi}\ket{\Psi_0}),
\end{align}
\begin{align}
    S_t(\beta_{n+1})\hat{U}_{2n+1}\ket{\Psi_{0}}=& (\hat{I}-B_{n+1}\hat{\Pi})(\gamma_{n}\ket{\Psi_{0}}-\theta_{n}\hat{\Pi}\ket{\Psi_0})\nonumber\\
    =&\bigg[\gamma_{n}\ket{\Psi_0}\nonumber\\
    &-(\gamma_{n}B_{n+1}+\theta_{n}-\theta_{n}B_{n+1})\hat{\Pi}\ket{\Psi_0}\bigg].
\end{align}
The state after applying $\hat{U}_{2n+3}$ is
\begin{align}
    \hat{U}_{2n+3}\ket{\Psi_0}=& -(I-A_{n+1}\ket{\Psi_0}\bra{\Psi_0})\bigg[\gamma_{n}\ket{\Psi_0}\nonumber\\
    &-(\gamma_{n}B_{n+1}+\theta_{n}e^{i\beta_{n+1}})\hat{\Pi}\ket{\Psi_0}\bigg] \nonumber\\
    =& -\big(e^{-i\alpha_{n+1}}\gamma_{n}+A_{n+1}B_{n+1}\gamma_{n}\zeta \nonumber \\
    &+e^{i\beta_{n+1}}A_{n+1}\theta_{n}\zeta\big)\ket{\Psi_{0}} \nonumber\\
    &+\big(\gamma_{n}B_{n+1}+e^{i\beta_{n+1}}\theta_{n}\big)\hat{\Pi}\ket{\Psi_0} \nonumber\\
    =& \gamma_{n+1}\ket{\Psi_0}-\theta_{n+1}\hat{\Pi}\ket{\Psi_0},
\end{align}
where $\zeta = \braket{\Psi_0|\hat{\Pi}|\Psi_0}$ and
\begin{align}
\begin{cases} 
    \gamma_{n+1} =&  -\big(e^{-i\alpha_{n+1}}\gamma_{n}+A_{n+1}B_{n+1}\gamma_{n}\zeta\\
    &+e^{i\beta_{n+1}}A_{n+1}\theta_{n}\zeta\big)   ,  \\ 
    \theta_{n+1} =& - \big(\gamma_{n}B_{n+1}+e^{i\beta_{n+1}}\theta_{n}\big).
\end{cases}\label{eq:Gamma_Theta}
\end{align}
Then by induction, we can conclude that for any L, there exist $\gamma_{l},\,\, \theta_{l} \in \mathbb{C}$ satisfying Eq.~\eqref{eq:Gamma_Theta} such that the final state always lies in the subspace spanned by {$\ket{\Psi_0},\hat{\Pi}\ket{\Psi_0}$}.
The fidelity of the final state is 
\begin{align}
    \mathcal{F}_{\text{NOAA}} =& \braket{\Psi_f|\hat{\Pi}_{0}|\Psi_{f}}\\
                              =& (\gamma_{l}^* \bra{\Psi_0} - \theta_{l}^* \bra{\Psi_0}\hat{\Pi})\hat{\Pi}_{0}(\gamma_{l}\ket{\Psi_0}-\theta_{l}\hat{\Pi}\ket{\Psi_0})\\
                              =& |\gamma_{l}-\theta_{l}|^{2} \braket{\Psi_{0}|\hat{\Pi}_{0}|\Psi_{0}},
\end{align}
here we use the fact that $\hat{\Pi}_{0}\hat{\Pi}=\hat{\Pi}_{0}$.

\section{Proof of Lemma 2}\label{apendix:Lemma2}
Define
\begin{equation}
    \hat{Q} = (\hat{I} - \hat{\Pi}^{\textrm{mis}}_{0})\hat{\Pi}_{0}(\hat{I} - \hat{\Pi}^{\textrm{mis}}_{0}),
\end{equation}
\begin{equation}
    \hat{V} = \hat{\Pi}^{\textrm{mis}}_{0}\hat{\Pi}_{0}(\hat{I}-\hat{\Pi}^{\textrm{mis}}_{0}),
\end{equation}
then the ideal projector can be expanded as 
\begin{align}
    \hat{\Pi}_{0} =& \big(\hat{\Pi}^{\textrm{mis}}_{0}+ (1 -\hat{\Pi}^{\textrm{mis}}_{0})\big) \hat{\Pi}_{0} \big(\hat{\Pi}^{\textrm{mis}}_{0}+ (1 -\hat{\Pi}^{\textrm{mis}}_{0})\big)\nonumber \\ 
                  =& \hat{\Pi}^{\textrm{mis}}_{0}\hat{\Pi}_{0}\hat{\Pi}^{\textrm{mis}}_{0} +\hat{Q} + \hat{V} + \hat{V}^{^{\dagger}}
\end{align}
thus the fidelity of the mismatched NOAA is 
\begin{align}
    \mathcal{F}^{\textrm{mis}}_{\text{NOAA}}=& \braket{\tilde\Psi_{f}|\hat{\Pi}_{0}|\tilde\Psi_{f}}\nonumber \\
                                     =& \braket{\tilde\Psi_{f}|\hat{\Pi}^{\textrm{mis}}_{0}\hat{\Pi}_{0}\hat{\Pi}^{\textrm{mis}}_{0}|\tilde\Psi_{f}} + \braket{\tilde\Psi_{f}|\hat{Q}|\tilde\Psi_{f}} \nonumber \\ 
                                     &+ \braket{\tilde\Psi_{f}|\hat{V}|\tilde\Psi_{f}} + \braket{\tilde\Psi_{f}|\hat{V}^{\dagger}|\tilde\Psi_{f}} \nonumber\\
                                     =& \braket{\tilde\Psi_{f}|\hat{\Pi}^{\textrm{mis}}_{0}\hat{\Pi}_{0}\hat{\Pi}^{\textrm{mis}}_{0}|\tilde\Psi_{f}} +\Delta_{f}^{2}+2\mathrm{Re}(\braket{\tilde\Psi_{f}|\hat{V}|\tilde\Psi_{f}}),
\end{align}
where $\Delta_{f} = ||\hat{\Pi}_{0}(\hat{I} - \hat{\Pi}^{\textrm{mis}}_{0})\ket{\tilde{\Psi}_{f}}||$.
Using the Cauchy-Schwarz inequality, the norm of the term $\braket{\tilde\Psi_{f}|\hat{V}|\tilde\Psi_{f}}$ is bounded by
\begin{align}
    |\braket{\tilde\Psi_{f}|\hat{V}|\tilde\Psi_{f}}| =& |\braket{\tilde\Psi_{f}|\hat{\Pi}^{\textrm{mis}}_{0}\hat{\Pi}_{0}(\hat{I}-\hat{\Pi}^{\textrm{mis}}_{0})|\tilde\Psi_{f}}| \nonumber \\
    =&|\braket{\tilde\Psi_{f}|(\hat{\Pi}^{\textrm{mis}}_{0}\hat{\Pi}_{0}) \big(\hat{\Pi}_{0}(\hat{I}-\hat{\Pi}^{\textrm{mis}}_{0})\big)|\tilde\Psi_{f}}| \nonumber \\
    \le & || \hat{\Pi}_{0}(\hat{I}- \hat{\Pi}^{\textrm{mis}}_{0})\ket{\tilde\Psi_{f}}||\cdot ||\hat{\Pi}_{0}\hat{\Pi}^{\textrm{mis}}_{0}\ket{\tilde\Psi_{f}}||\\
    =& \Delta_{f}\sqrt{\braket{\tilde\Psi_{f}|\hat{\Pi}^{\textrm{mis}}_{0}\hat{\Pi}_{0}\hat{\Pi}^{\textrm{mis}}_{0}|\tilde\Psi_{f}}},
\end{align}
so the real part of $\braket{\tilde\Psi_{f}|\hat{V}|\tilde\Psi_{f}}$ is bounded by
\begin{align}
    -\Delta_{f}\sqrt{\braket{\tilde\Psi_{f}|\hat{\Pi}^{\textrm{mis}}_{0}\hat{\Pi}_{0}\hat{\Pi}^{\textrm{mis}}_{0}|\tilde\Psi_{f}}} \le  \mathrm{Re}(\braket{\tilde\Psi_{f}|\hat{V}|\tilde\Psi_{f}}) \nonumber\\ \le \Delta_{f}\sqrt{\braket{\tilde\Psi_{f}|\hat{\Pi}^{\textrm{mis}}_{0}\hat{\Pi}_{0}\hat{\Pi}^{\textrm{mis}}_{0}|\tilde\Psi_{f}}},
\end{align}
take this inequality into the expression of the fidelity, 
\begin{align}
    \braket{\tilde\Psi_{f}|\hat{\Pi}^{\textrm{mis}}_{0}\hat{\Pi}_{0}\hat{\Pi}^{\textrm{mis}}_{0}|\tilde\Psi_{f}} +\Delta_{f}^{2} &- 2 \Delta_{f}\sqrt{\braket{\tilde\Psi_{f}|\hat{\Pi}^{\textrm{mis}}_{0}\hat{\Pi}_{0}\hat{\Pi}^{\textrm{mis}}_{0}|\tilde\Psi_{f}}} \nonumber \\ 
    &\le \mathcal{F}^{\textrm{mis}}_{\text{NOAA}} \nonumber\le  \\
    \braket{\tilde\Psi_{f}|\hat{\Pi}^{\textrm{mis}}_{0}\hat{\Pi}_{0}\hat{\Pi}^{\textrm{mis}}_{0}|\tilde\Psi_{f}} +\Delta_{f}^{2} &+2\Delta_{f}\sqrt{\braket{\tilde\Psi_{f}|\hat{\Pi}^{\textrm{mis}}_{0}\hat{\Pi}_{0}\hat{\Pi}^{\textrm{mis}}_{0}|\tilde\Psi_{f}}}
\end{align}
which is equivalent to 
\begin{equation}
    (\sqrt{\mathcal{K}} - \Delta_{f})^2 \le \mathcal{F}^{\textrm{mis}}_{\text{NOAA}} \le (\sqrt{\mathcal{K}} + \Delta_{f})^2,\label{eq:Fidelity_of_mismatched_NOAA_bound}
\end{equation}
where $\mathcal{K} = \braket{\tilde\Psi_{f}|\hat{\Pi}^{\textrm{mis}}_{0}\hat{\Pi}_{0}\hat{\Pi}^{\textrm{mis}}_{0}|\tilde\Psi_{f}}$.

\section{Proof of Lemma 3}\label{apendix:Lemma3}
Using Yoder's phase, we first change our basis to an orthogonal one, let's define
\begin{align}
  \ket{t}=\frac{\hat{\Pi}\ket{\Psi_{0} } }{\sqrt{ \xi }} ,\,\,\ket{\bar{t}}=\frac{(\hat{I}-\hat{\Pi})\ket{\Psi_{0} } }{\sqrt{ 1-\xi }}   
\end{align}
where $\xi=\braket{ \Psi_{0} |\hat{\Pi} |\Psi_{0}  }$ and $0<\xi<1$.
The initial state could be expressed as
\begin{align}
  \ket{\Psi_{0} }=\sqrt{ \xi }\ket{t}+\sqrt{ 1-\xi }\ket{\bar{t}}.    
\end{align}
It is easy to tell
\begin{equation}
  \braket{ t|\hat{\Pi} |t  }=1,\,\,\braket{ \bar{t}|\hat{\Pi} |\bar{t}  }  =0,  
\end{equation}
Within the two-dimensional subspace spanned by $\{\ket{t},\ket{\bar{t}}\}$, $\hat{\Pi}$ acts as the projector onto $\ket{t}$.
Thus the problem is reduced to the AA case, by using the fixed-point algorithm's phase\cite{yoderFixedPointQuantumSearch2014}, we have the final state
\begin{align}
\ket{\Psi_{f}}=&\left(\prod_{j}\hat{G}_{j}\right)\ket{\Psi_{0} }\\
=&\sqrt{ P_{L} }\ket{t}+e^{i\chi_{L}}\sqrt{ 1-P_{L} } \ket{\bar{t}}     
\end{align}
where $\chi_{L}$ is the relative phase between the two components, and 
\begin{align}
P_{L}=&\braket{ \Psi_{f}|\hat{\Pi} | \Psi_{f} } \\
=&1-\delta^{2}T_{L}(T_{1/L}(1 /\delta)\sqrt{ 1-\braket{ \Psi_{0}|\hat{\Pi} |\Psi_{0}  }  })^{2}.    
\end{align}
Using Eq. \eqref{eq:Fidelity_o f_NOAA_bound}, the NOAA fidelity is 
\begin{align}
\mathcal{F}_{\text{NOAA}}
= & \frac{\braket{ \Psi_{0}|\hat{\Pi}_{0} |\Psi_{0}  } }{\braket{ \Psi_{0}|\hat{\Pi} |\Psi_{0}  } }\nonumber\\
&\times\left[1-\delta^{2}T_{L}\left(T_{1/L}(1 /\delta)\sqrt{ 1-\braket{ \Psi_{0}|\hat{\Pi} |\Psi_{0}  }  }\right)^{2}\right]
\end{align}
\bibliographystyle{apsrev4-2}
\bibliography{ref}
\end{document}